\title{Nash equilibria in games over graphs equipped
  with a communication mechanism}
\titlerunning{Nash equilibria in games over graphs
  equipped with a communication mechanism}
\author{Patricia Bouyer}{LSV, CNRS, ENS Paris-Saclay, Universit\'e
  Paris-Saclay,  France}{bouyer@lsv.fr}{https://orcid.org/0000-0002-2823-091}{}
\author{Nathan Thomasset}{LSV, ENS Paris-Saclay, CNRS, Universit\'e
  Paris-Saclay, France}{nathan.thomasset@ens-paris-saclay.fr}{}{}
\authorrunning{P. Bouyer and N. Thomasset}
\keywords{Multiplayer games, Nash equilibria, partial information}
\begin{document}

\maketitle

\begin{abstract}
  We study pure Nash equilibria in infinite-duration games on graphs,
  with partial visibility of actions but communication (based on a
  graph) among the players. We show that a simple communication
  mechanism consisting in reporting the deviator when seeing it and
  propagating this information is sufficient for characterizing Nash
  equilibria.
  We propose an epistemic game construction, which conveniently
  records important information about the knowledge of the
  players. With this abstraction, we are able to characterize Nash
  equilibria which follow the simple communication pattern via winning
  strategies. We finally discuss the size of the construction, which
  would allow efficient algorithmic solutions to compute Nash
  equilibria in the original game.
\end{abstract}

\section{Introduction}

Multiplayer concurrent games over graphs allow to model rich
interactions between players. Those games are played as follows.  In a
state, each player chooses privately and independently an action,
defining globally a move (one action per player); the next state of
the game is then defined as the successor (on the graph) of the
current state using that move; players continue playing from that new
state, and form a(n infinite) play. Each player then gets a reward
given by a payoff function (one function per player). In particular,
objectives of the players may not be contradictory: those games are
non-zero-sum games, contrary to two-player games used for controllers
or reactive synthesis~\cite{thomas02,henzinger05}.

Using solution concepts borrowed from game theory, one can describe
interactions among the players, and in particular rational
behaviours of selfish players. One of the most basic and classically
studied solution concepts is that of Nash
equilibria~\cite{nash50}. A Nash equilibrium is a strategy profile
where no player can improve her payoff by unilaterally changing her
strategy. The outcome of a Nash equilibrium can therefore be seen as a
rational behaviour of the system.
While very much studied by game theorists, e.g. over (repeated) matrix
games, such a concept (and variants thereof) has been only rather
recently studied over infinite-duration games on graphs. Probably the
first works in that direction
are~\cite{CMJ04,CHJ06,ummels06,ummels08}. Several series of works have
followed. To roughly give an idea of the existing results, pure Nash
equilibria always exist in turn-based games for $\omega$-regular
objectives~\cite{UW11} but not in concurrent games, even with simple
objectives; they can nevertheless be
computed~\cite{UW11,BBMU15,brenguier16,bouyer18} for large classes of
objectives. The problem becomes harder with mixed (that is,
stochastic) Nash equilibria, for which we often cannot decide the
existence~\cite{UW11a,BMS14}.

Computing Nash equilibria requires to (i) find a good behaviour of the
system; (ii) detect deviations from that behaviour, and identify
deviating players; (iii) punish them.  This simple characterization of
Nash equilibria is made explicit in~\cite{CFGR16}.  Variants of Nash
equilibria require slightly different ingredients, but they are mostly
of a similar vein.

Many of those results are proven using the construction of a
two-player game, in which winning strategies correspond (in some
precise sense) to Nash equilibria in the original game. This
two-player game basically records the knowledge of the various players
about everything which can be uncertain: (a) the possible deviators
in~\cite{BBMU15}, and (b) the possible states the game can be
in~\cite{bouyer18}.  Extensions of this construction can be used for
other solution concepts like robust equilibria~\cite{brenguier16} or
rational synthesis~\cite{COT18}.

\medskip In this work, we consider infinite-duration games on graphs,
in which the game arena is perfectly known by all the players, but
players have only a partial information on the actions played by the
other players. 

The partial-information setting of this work is inspired
by~\cite{RT98}: it considers repeated games played on matrices, where
players only see actions of their neighbours. Neighbours are specified
by a communication graph. To ensure a correct detection of deviators,
the solution is to propagate the identity of the deviator along the
communication graph. A fingerprint (finite sequence of actions) of
every player is agreed at the beginning, and the propagation can be
made properly if and only if the communication graph is 2-connected,
ensuring large sets of Nash equilibria (formalized as a folk
theorem). Fingerprints are not adapted to the setting of graphs, since
they may delay the time at which a player will learn the identity of
the deviator, which may be prohibitive if a bad component of a graph
is then reached.

We therefore propose to add real communication among
players. Similarly to~\cite{RT98}, a player can communicate only with
her neighbours (also specified by a communication graph), but can send
arbitrary messages (modelled as arbitrary words over alphabet
$\{0,1\}$).
We assume that visited states are known by the players, hence only the
deviator (if any) may be unknown to the players. In this setting, we
show the following results:
\begin{itemize}
\item We show that a very simple epidemic-like communication mechanism
  is sufficient for defining Nash equilibria. It consists in (a)
  reporting the deviator (for the neighbours of the deviator) as soon
  as it is detected, and (b) propagating this information (for the
  other players).
\item We build an epistemic game, which tracks those strategy profiles
  which follow the above simple communication pattern. This is a
  two-player turn-based game, in which \Eve (the first player) suggest
  moves, and \Adam (the second player) complies (to generate the main
  outcome), or not (to mimic single-player deviations). The
  correctness of the construction is formulated as
  follows: there is a Nash equilibrium in the original game of payoff
  $p$ if and only if there is a strategy for \Eve in the epistemic
  game which is winning for $p$.
\item We analyze the complexity of this construction.
\end{itemize}

Note that we do not assume connectedness of the communication graph, hence
the particular case of a graph with no edges allows to recover the
setting of~\cite{BBMU15} while a complete graph allows to recover the
settings of~\cite{UW11,brenguier16}.

\medskip In Section~\ref{sec:defs}, we define our model and give an
example to illustrate the role of the communication graph. In
Section~\ref{sec:reducs}, we prove the simple communication
pattern. In Section~\ref{sec:epistemic}, we construct the epistemic
game and discuss its correctness. 
In Section~\ref{sec:complexity}, we discuss complexity issues.
All proofs are postponed to the technical appendix.

\section{Definitions}
\label{sec:defs}

We use the standard notations $\bbR$ (resp. $\bbQ$, $\bbN$) for the
set of real (resp. rational, natural) numbers.  If $S$ is a subset of
$\bbR$, we write $\overline{S}$ for $S \cup \{-\infty,+\infty\}$.

Let $S$ be a finite set and $R \subseteq S$. If $m$ is an $S$-vector
over some set $\Sigma$, we write $m(R)$ (resp. $m(-R)$) for the vector
composed of the $R$-components of $m$ (resp. all but the
$R$-components). We also use abusively the notations $m(i)$
(resp. $m(-i)$) when $i$ is a single element of $S$, and may sometimes
even use $m_i$ if this is clear in the context. Also, if $s \in S$ and
$a \in \Sigma$, then $m[s/a]$ is the vector where the value $m(s)$ is
replaced by $a$.

If $S$ is a finite set, we write $S^*$ (resp. $S^+$, $S^\omega$) for
the set of words (resp. non-empty word, infinite words) defined on
alphabet $S$.

\subsection{Concurrent games and communication graphs}

We use the model of concurrent multi-player games~\cite{BBMU15},
based on the two-player model of~\cite{AHK02}.

\begin{definition}
  A \emph{concurrent multiplayer game} is a tuple $\calG =
  \tuple{V,v_\init,\Act,\Agt,\Sigma,\Allow,\Tab, (\payoff_a)_{a \in
      \Agt}}$, where $V$ is a finite set of vertices, $v_\init \in V$
  is the initial vertex, $\Act$ is a finite set of actions, $\Agt$ is
  a finite set of players, $\Sigma$ is a finite alphabet,
  $\Allow\colon V \times \Agt \to 2^\Act\setminus\{\emptyset\}$ is a
  mapping indicating the actions available to a given player in a
  given vertex,\footnote{This condition ensures that the game is
    non-blocking.} $\Tab\colon V \times \Act^{\Agt} \to V$ associates,
  with a given vertex and a given action tuple the target vertex, for
  every $a \in \Agt$, $\payoff_a \colon V^\omega \to \bbD$ is a payoff
  function with values in a domain $\bbD \subseteq \overline{\bbR}$.
\end{definition}

An element of $\Act^\Agt$ is called a \emph{move}. Standardly
(see~\cite{AHK02} for two-player games and~\cite{BBMU15} for the
multiplayer extension), concurrent games are played as follows: from a
given vertex $v$, each player selects independently an action
(allowed by $\Allow$), which altogether form a move $m$; then, the
game proceeds to the next vertex, given by $\Tab(v,m)$; the game
continues from that new vertex.

Our setting will refine this model, in that at each round, each player
will also broadcast a message, which will be received by some of the
players. The players that can receive a message will be specified
using a communication graph that we will introduce later.  The role of
the messages will remain unclear until we commit to the definition of
a strategy.

Formally, a \emph{full history} $h$ in $\calG$ is a finite sequence
\[
v_0 \cdot (m_0,\mes_0) \cdot v_1 \cdot (m_1,\mes_1) \cdot v_2 \ldots
(m_{s-1},\mes_{s-1}) \cdot v_s \in V \cdot \left(\left(\Act^\Agt
    \times (\{0,1\}^*)^\Agt\right)\cdot V\right)^*
\]
such that for every $0 \le r < s$, for every $a \in \Agt$,
$m_r(a) \in \Allow(v_r,a)$, and $v_{r+1} = \Tab(v_r,m_r)$. For every
$0 \le r < s$, for every $a \in \Agt$, the set $\mes_r(a)$ is the
message appended to action $m_r(a)$ at step $r+1$, which will be
broadcast to some other players. For readability we will also write
$h$ as
\[
  v_0 \xrightarrow{m_0,\mes_0} v_1 \xrightarrow{m_1,\mes_1} v_2 \ldots
  \xrightarrow{m_{s-1},\mes_{s-1}} v_s
\]
We write $\vertices(h) = v_0 \cdot v_1 \cdots v_s$, and $\last(h)$ for
the last vertex of $h$ (that is, $v_s$). If $r \le s$, we also write
$h_{\ge r}$ (resp. $h_{\le r}$) for the suffix $v_r \cdot (m_r,\mes_r)
\cdot v_{r+1} \cdot (m_{r+1},\mes_{r+1}) \ldots (m_{s-1},\mes_{s-1})
\cdot v_s$ (resp. prefix $v_0 \cdot (m_0,\mes_0) \cdot v_1 \cdot
(m_1,\mes_1) \ldots (m_{r-1},\mes_{r-1}) \cdot v_r$).  We write
$\Hist_\calG(v_0)$ (or simply $\Hist(v_0)$ if $\calG$ is clear in the
context) for the set of full histories in $\calG$ that start at $v_0$.
If $h \in \Hist(v_0)$ and $h' \in \Hist(\last(h))$, then we write $h
\cdot h'$ for the obvious concatenation of histories (it then belongs
to $\Hist(v_0)$).

\medskip We add a \emph{communication (directed) graph} $G = (\Agt,E)$
to the context. The set of vertices of $G$ is the set of players, and
edges define a neighbourhood relation. An edge $(a,b) \in E$ (with $a
\ne b$) means that player $b$ can see which actions are played by
player $a$ together with the messages broadcast by player
$a$.
Later we write $a \fleche b$ whenever $(a,b) \in E$ or $a = b$, and
$\Vois(b) = \{a \in \Agt \mid a \fleche b\}$ for the so-called
\emph{neighbourhood} of $b$ (that is, the set of players about which
player $b$ has information).
If $a,b \in \Agt$, we write $\dist_G(a,b)$ for the distance in $G$
from $a$ to $b$ ($+\infty$ if there is no path from $a$ to $b$).

Let $a \in \Agt$ be a player. 
The projection of $h$ for $a$ is denoted $\pi_a(h)$ and is defined by
\begin{multline*}
  v_0 \cdot (m_0(\Vois(a)),\mes_0(\Vois(a))) \cdot v_1 \cdot
  (m_1(\Vois(a)),\mes_1(\Vois(a))) \cdot v_2 \ldots \\
  \ldots (m_{s-1}(\Vois(a)),\mes_{s-1}(\Vois(a))) \cdot v_s \in V
  \cdot \left(\left(\Act^{\Vois(a)} \times
      (\{0,1\}^*)^{\Vois(a)}\right) \cdot V \right)^*
\end{multline*}
This will be the information available to player $a$. In particular,
messages broadcast by the players are part of this information. Note
that we assume perfect recall, that is, while playing, player $a$ will
remember all her past knowledge, that is, all of $\pi_a(h)$ if $h$ has
been played so far.  We define the \emph{undistinguishability relation
  $\sim_a$} as the equivalence relation over full histories induced by
$\pi_a$: for two histories $h$ and $h'$, $h \sim_a h'$ iff $\pi_a(h) =
\pi_a(h')$. While playing, if $h \sim_a h'$, $a$ will not be able to
know whether $h$ or $h'$ has been played.  We write
$\Hist_{\calG,a}(v_0)$ for the set of histories for player $a$ (also
called $a$-histories) from $v_0$.

We extend all the above notions to infinite sequences in a
straightforward way and to the notion of \emph{full play}. We write
$\Plays_\calG(v_0)$ (or simply $\Plays(v_0)$ if $\calG$ is clear in
the context) for the set of full plays in $\calG$ that start at $v_0$.

\medskip A \emph{strategy} for a player $a \in \Agt$ from vertex $v_0$
is a mapping $\sigma_a \colon \Hist_{\calG}(v_0) \to \Act \times
\{0,1\}^*$ such that for every history $h \in \Hist_{\calG}(v_0)$,
$\sigma_a(h)[1] \in \Allow(\last(h),a)$, where the notation
$\sigma_a(h)[1]$ denotes the first component of the pair
$\sigma(h)$. The value $\sigma_a(h)[1]$ represents the action that
player $a$ will do after $h$, while $\sigma_a(h)[2]$ is the message
that she will append to her action and broadcast to all players $b$
such that $a \fleche b$.  The strategy $\sigma_a$ is said
\emph{$G$-compatible} if furthermore, for all histories $h,h' \in
\Hist(v_0)$, $h \sim_a h'$ implies $\sigma_a(h) = \sigma_a(h')$. In
that case, $\sigma_a$ can equivalently be seen as a mapping
$\Hist_{\calG,a}(v_0) \to \Act \times \{0,1\}^*$. An \emph{outcome} of
$\sigma_a$ is a(n infinite) play $\rho = v_0 \cdot (m_0,\mes_0) \cdot
v_1 \cdot (m_1,\mes_1) \ldots$ such that for every $r \ge 0$,
$\sigma_a(\rho_{\le r}) = (m_r(a),\mes_r(a))$. We write
$\out(\sigma_a,v_0)$ for the set of outcomes of $\sigma_a$ from $v_0$.

A \emph{strategy profile} is a tuple
$\sigma_\Agt=(\sigma_a)_{a \in \Agt}$, where, for every player
$a \in \Agt$, $\sigma_a$ is a strategy for player $a$. The strategy
profile is said \emph{$G$-compatible} whenever each $\sigma_a$ is
$G$-compatible. We write $\out(\sigma_\Agt,v_0)$ for the unique full
play from $v_0$, which is an outcome of all strategies part of
$\sigma_\Agt$.

When $\sigma_\Agt$ is a strategy profile and $\sigma'_d$ a player-$d$
$G$-compatible strategy, we write $\sigma_\Agt[d/\sigma'_d]$ for the
profile where player $d$ plays according to $\sigma'_d$, and each
other player $a$ ($\ne d$) plays according to $\sigma_a$. The strategy
$\sigma'_d$ is a \emph{deviation} of player $d$, or a
\emph{$d$-deviation} w.r.t. $\sigma_\Agt$.  Such a $d$-deviation is
said \emph{profitable} w.r.t. $\sigma_\Agt$ whenever
$\payoff_d \Big(\vertices(\out(\sigma_\Agt,v_0))\Big) <
\payoff_d\Big(\vertices(\out(\sigma_\Agt[d/\sigma'_d],v_0))\Big)$.

\begin{definition}
  A \emph{Nash equilibrium} from $v_0$ is a $G$-compatible strategy
  profile $\sigma_\Agt$ such that for every $d \in \Agt$, there is no
  profitable $d$-deviation w.r.t. $\sigma_\Agt$.
\end{definition}
In this definition, deviation $\sigma'_d$ needs not really to be
$G$-compatible, since the only meaningful part of $\sigma'_d$ is along
$\out(\sigma[d/\sigma'_d],v_0)$, where there are no
$\sim_d$-equivalent histories: any deviation can be made
$G$-compatible without affecting the profitability of the resulting
outcome.

\begin{remark} 
    Before pursuing our study, let us make clear what information
    players have: a player knows the full arena of the game and the
    whole communication graph; when playing the game, a player sees
    the states which are visited, and see actions of and messages from
    her neighbours (in the communication graph). When playing the
    profile of a Nash equilibrium, all players know all strategies,
    hence a player knows precisely what is expected to be the main
    outcome; in particular, when the play leaves the main outcome,
    each player knows that a deviation has occurred, even though she
    didn't see the deviator or received any message. Note that
    deviations which do not leave the main outcome may occur; in this
    case, only the neighbours of the deviator will know that such a
    deviation occurred; we will see that it is useless to take care of
    such deviations.
\end{remark}

\subsection{An example}
\label{subsec:ex}

\begin{figure}[t]
  \centering
  \begin{tikzpicture}[-latex, auto, node distance = 2 cm and 2 cm, on grid, semithick, roundnode/.style={draw, circle}, every text node part/.style={align=center}]
    
    \node[roundnode] (s0) {$v_0$};
    \node[roundnode] (s'0) [above=2 of s0] {$v'_0$};
    \node[roundnode] (s1) [left=3 of s0] {$v_1$};
    \node[roundnode] (s'1) [right=3 of s0] {$v'_1$};
    \node[roundnode] (s2) [below left = 2 and 2 of s0] {$v_2$};
    \node[roundnode] (s3) [below = 2 of s0] {$v_3$};
    \node[roundnode] (s4) [below right = 2 and 2 of s0] {$v_4$};

    \path (s0) edge[bend right]
    node[above,midway]{$\alpha^5$} (s1);
    
    \path (s0) edge[bend left]  node [above,pos=.6]
    {$\alpha^2\beta\alpha^2, \alpha^3 \beta \alpha,\alpha^4\beta$} (s'1);
    
    \path (s1) edge (s0);
    \path (s'1) edge (s0);
    
    \path (s0) edge node [midway,above,sloped]
    {$\alpha\beta\Act^3$} (s2);
    
    \path (s0) edge node [midway,sloped,above]
    {$\beta\beta\Act^3$} (s3);

    \path (s0) edge node [midway,sloped,above] {$\beta\alpha\Act^3$} (s4);
    \path (s0) edge (s'0);
    
    \path [draw,-latex'] (s3) .. controls +(-150:1.5cm) and +(150:1.5cm)
    .. (s3);
    
    \path [draw,-latex'] (s4) .. controls +(-150:1.5cm) and +(150:1.5cm)
    .. (s4);
    \path [draw,-latex'] (s2) .. controls +(-150:1.5cm) and +(150:1.5cm)
    .. (s2);
    \path [draw,-latex'] (s'0) .. controls +(-150:1.5cm) and +(150:1.5cm)
    .. (s'0);
    

    \begin{scope}[xshift=6cm,yshift=3cm,inner sep=1.5pt, >={Latex[length=1.5mm,width=2mm,angle'=60,open]}]
      \node (0) [] {$0$};
      \node (1) [below right = .9 of 0] {$1$};
      \node (2) [below = .7 of 1] {$2$};
      \node (4) [below left = .9 of 0] {$4$}; 
      \node (3) [below = .7 of 4] {$3$};
      \node (G1) [right=.8 of 1] {$G_1$};
      \path[draw,<->] (0)--(1);
      \path[draw,->] (3)--(4);
      \path[draw,->] (4)--(0);      
    \end{scope}

    \begin{scope}[xshift=6cm,yshift=1cm,inner sep=1.5pt, >={Latex[length=1.5mm,width=2mm,angle'=60,open]}]
      \node (0) [] {$0$};
      \node (1) [below right = .9 of 0] {$1$};
      \node (2) [below = .7 of 1] {$2$};
      \node (4) [below left = .9 of 0] {$4$}; 
      \node (3) [below = .7 of 4] {$3$};
      \node (G1) [right=.8 of 1] {$G_2$};
      \path[draw,<->] (0)--(1);
      \path[draw,->] (3)--(0);
      \path[draw,->] (4)--(0);      
    \end{scope}

    \begin{scope}[xshift=6cm,yshift=-1cm,inner sep=1.5pt, >={Latex[length=1.5mm,width=2mm,angle'=60,open]}]
      \node (0) [] {$0$};
      \node (1) [below right = .9 of 0] {$1$};
      \node (2) [below = .7 of 1] {$2$};
      \node (4) [below left = .9 of 0] {$4$}; 
      \node (3) [below = .7 of 4] {$3$};
      \node (G1) [right=.8 of 1] {$G_3$};
      \path[draw,<->] (0)--(1);
      \path[draw,->] (4)--(0);
    \end{scope}
  \end{tikzpicture}
  \caption{A five-player game (left) and three communication graphs
    (right); self-loops $a \fleche a$ omitted from the picture.  The
    action alphabet is $\{\alpha,\beta\}$. The transition function is
    represented as arrows from one vertex to another labeled with the
    action profile(s) allowing to go from the origin vertex to the
    destination one.  We write action profiles with length-5
    words. Convention: no label means complementary labels (e.g. one
    goes from $v_0$ to $v'_0$ using any action profile that is not in
    $\alpha^5+(\alpha\beta+\beta\beta+\beta\alpha)\Act^3+\alpha^2(\beta\alpha^2+\alpha\beta\alpha+\alpha^2\beta))$.}
 \label{gameex}
\end{figure}
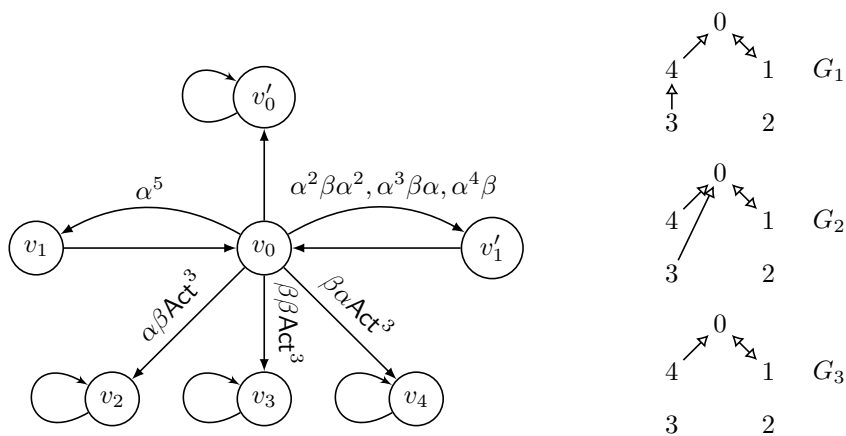

We consider the five-player game described in Figure~\ref{gameex} in
which we denote the players $\Agt = \{0,1,2,3,4\}$. The action
alphabet is $\Act=\{\alpha,\beta\}$, and the initial vertex is assumed
to be $v_0$. We suppose the payoff function vector is defined as (to
be read as the list of payoffs of the players):
\[
\payoff(\rho)= 
\left\{\begin{array}{ll}
  (0,0,1,1,1) &  \text{if}\ \rho\ \text{visits}\ v_1\ \text{infinitely
                often}\\
  (0,0,2,2,2) & \text{if}\ \rho\ \text{visits}\  v_1\ \text{finitely
                often and}\  v'_1\ \text{infinitely
                often} \\
  (0,0,0,2,2) & \text{if}\ \rho\ \text{ends
                                         up in}\ v_2\\
  (0,0,2,0,2) & \text{if}\ \rho\ \text{ends
                                         up in}\ v_3\\
  (0,0,2,2,0)  & \text{if}\ \rho\ \text{ends
                                         up in}\ v_4\\
  (0,0,3,3,3) & \text{if}\ \rho\ \text{ends
                                         up in}\ v'_0
\end{array}\right.
\]

We consider a (partial) strategy profile $\sigma$ whose main outcome
is:
\[
\rho = \big(v_0 \cdot (\alpha^5,\mes_\epsilon) \cdot v_1 \cdot
(\alpha^5,\mes_\epsilon) \big)^\omega
\]
where $\mes_\epsilon(a) = \epsilon$ for every $a \in \Agt$.  Note that
players $0$ and $1$ cannot benefit from any deviation since their
payoffs is uniformly $0$. Then notice that no one alone can deviate
from $\rho$ to $v'_0$. Now, the three players $2$, $3$ and $4$ can
alone deviate to $v'_1$ and (try to) do so infinitely often. We
examine those deviations. If players $0$ and $1$ manage to learn who
is the deviator, then, together, they can punish the deviator: if they
learn that player $2$ (resp. $3$, $4$) is the deviator, then they will
enforce vertex $v_2$ (resp. $v_3$, $v_4$). If they do not manage to
learn who is the deviator, then they will not know what to do, and
therefore, in any completion of the strategy profile, there will be
some profitable deviation for at least one of the players (hence there
will not be any Nash equilibrium whose main outcome is $\rho$).

We examine now the three communication graphs $G_1$, $G_2$ and $G_3$
depicted in Figure~\ref{gameex}. Using communication based on graph
$G_1$, if player $4$ deviates, then player $0$ will see this
immediately and will be able to communicate this fact to player $1$;
if player $3$ deviates, then player $4$ will see this immediately and
will be able to communicate this fact to player $0$, which will
transmit to player $1$; if player $2$ deviates, then no one will see
anything, hence they will deduce the identity of the deviator in all
the cases.

Using communication based on graph $G_2$, if either player $3$ or
player $4$ deviates, then player $0$ will see this immediately and
will be able to communicate this fact to player $1$ using the richness
of the communication scheme (words over $\{0,1\}$). Like before, the
identity of deviator $2$ will be guessed after a while.

Using communication based on graph $G_3$, if player $4$ deviates, then
player $0$ will see this immediately and will be able to communicate
this fact to player $1$ (as before); now, no one (except players $2$
and $3$) will be able to learn who is deviating, if player $2$ or
player $3$ deviates. 

We can conclude that there is a Nash equilibrium with graph $G_1$ or
$G_2$ whose main outcome is $\rho$, but not with graph $G_3$.

\subsection{Two-player turn-based game structures}

Two-player turn-based game structures are specific cases of the
previous model, where at each vertex, at most one player has more than
one action in her set of allowed actions. But for convenience, we will
give a simplified definition, with only objects that will be useful.

{ \emergencystretch 2cm
A two-player turn-based game structure is a tuple
$G = \tuple{S,S_\Eve,S_\Adam,s_\init,A,\Allow,\Tab}$, where
$S = S_\Eve \sqcup S_\Adam$ is a finite set of states (states in
$S_\Eve$ belong to player \Eve whereas states in $S_\Adam$ belong to
player \Adam), $s_\init \in S$ is the initial state, $A$ is a finite
alphabet, $\Allow \colon S \to 2^A \setminus \{\emptyset\}$ gives the
set of available actions, and $\Tab \colon S \times A \to S$ is the
next-state function. If $s \in S_\Eve$ (resp. $S_\Adam$), $\Allow(s)$
is the set of actions allowed to \Eve (resp. \Adam) in state $s$.}

In this context, strategies will use sequences of states.
That is, if $a$ denotes \Eve or \Adam, an $a$-strategy is a partial
function $\sigma_a: S^* \cdot S_a \to A$ such that for every
$H \in S^* \cdot S_a$ such that $\sigma_a(H)$ is defined,
$\sigma_a(H) \in \Allow(\last(H))$. Note that we do not include any
winning condition or payoff function in the tuple, hence the name
structure.

\subsection{The problems we are looking at}

We are interested in the constrained existence of a Nash
equilibrium. For simplicity, we define rectangular threshold
constraints, but could well impose more complex constraints, like
Boolean combinations of linear constraints.

\begin{problem}[Constrained existence problem]
  \emergencystretch 1.5cm Given a concurrent game $\calG = \langle
  V,v_\init,\Agt,\Act,\Sigma,\Allow,\Tab, (\payoff_a)_{a \in \Agt}
  \rangle$, a communication graph $G$ for $\Agt$, a predicate $P$ over
  $\mathbb{R}^{|\Agt|}$, can we decide whether there exists a Nash
  equilibrium $\sigma_\Agt$ from $v_\init$ such that
  $\payoff(\vertices(\out(\sigma_\Agt,v_\init))) \in P$?  If so,
  compute one.
  If the predicate $P$ is trivial, we simply speak of the existence
  problem.
\end{problem}

The case where the communication graph has no edge was studied in
depth in~\cite{BBMU15}, with a generic two-player construction called
the suspect construction, allowing to decide the constrained existence
problem for many kinds of payoff functions.  The case where the
communication graph is a clique was the subject of the
work~\cite{brenguier16}. The general case of a communication graph has
not been investigated so far, but induces interesting developments. In
the next section, we show that we can restrict the search of Nash
equilibria to the search of so-called normed strategy profiles, where
the communication via messages follows a very simple pattern. We also
argue that deviations which do not impact the visited vertices should
not be considered in the analysis. Given those reductions, we then
propose the construction of a two-player game, which will track those
normed profiles. This construction is inspired by the suspect-game
construction of~\cite{BBMU15} and of the epistemic game
of~\cite{bouyer18}.

\section{Reduction to profiles following a simple communication mechanism}
\label{sec:reducs}

We fix a concurrent game
$\calG = \tuple{V,v_\init,\Act,\Agt,\Sigma,\Allow,\Tab, (\payoff_a)_{a
    \in \Agt}}$ and a communication graph $G$. We assume that
$v_\init = v_0$. We will reduce the search for Nash
equilibria to the search for strategy profiles with a very specific
shape. In particular, we will show that the richness of the
communication offered by the setting is somehow useless, and that a
very simple communication pattern will be sufficient for 
characterizing Nash equilibria.

In the following, we write $\mes_\epsilon$ for the vector assigning
the empty word $\epsilon$ to every player $a \in \Agt$. Furthermore,
for every $d \in \Agt$, we pick some word $\id_d \in \{0,1\}^+$ which
are all distinct (and different from $\epsilon$).

\smallskip We first define restrictions for deviations.  Let
$\sigma_\Agt$ be a strategy profile. A player-$d$ deviation
$\sigma'_d$ is said \emph{immediately visible} whenever, writing $h$
for the longest common prefix of $\out(\sigma_\Agt,v_0)$ and
$\out(\sigma_\Agt[d/\sigma'_d],v_0)$,
$\Tab(\last(h),m) \ne \Tab(\last(h),m')$, where
$m = \sigma_\Agt(h)[1]$ and
$m' = \big(\sigma_\Agt[d/\sigma'_d](h)\big)[1]$ are the next moves
according to $\sigma_\Agt$ and $\sigma_\Agt[d/\sigma'_d]$. That is, at
the first position where player $d$ changes her strategy, it becomes
public information that a deviation has occurred (even though some
players know who deviated~---all the players $a$ with
$d \fleche a$---, and some other don't know).  It is furthermore
called \emph{honest} whenever for every
$h' \in \out(\sigma_\Agt[d/\sigma'_d],v_0)$ such that $h$ is a
(non-strict) prefix of $h'$, $\sigma'_d(h')[2] = \id_d$. Somehow,
player $d$ admits she deviated, and does so immediately and forever.

\smallskip The simple communication mechanism that we will design
consists in reporting the deviator (role of the direct neighbours of
the deviator), and propagating this information along the
communication graph (for all the other players). Formally, let
$\sigma_\Agt$ be a strategy profile, and let $\rho$ be its main
outcome. The profile $\sigma_\Agt$ will be said \emph{normed} whenever
the following conditions hold:
\begin{enumerate}
\item for every
  $h \in \out(\sigma_\Agt) \cup \bigcup_{d \in \Agt,\ \sigma'_d}
  \out(\sigma_\Agt[d/\sigma'_d],v_0)$, if $\vertices(h)$ is a prefix
  of $\vertices(\rho)$, then for every $a \in \Agt$,
  $\sigma_a(h)[2] = \epsilon$;
\item for every $d \in \Agt$, for every $d$-strategy $\sigma'_d$,
  if $h \cdot (m,\mes) \cdot v \in \out(\sigma_\Agt[d/\sigma'_d],v_0)$
  is the first step out of $\vertices(\rho)$, then for every $d
  \fleche a$,
  $\sigma_a(h \cdot (m,\mes) \cdot v)[2] = \id_d$; 
\item for every $d \in \Agt$, for every $d$-strategy $\sigma'_d$, if
  $h \cdot (m,\mes) \cdot v \in \out(\sigma_\Agt[d/\sigma'_d],v_0)$
  has left the main outcome for more than one step, then for every $a
  \in \Agt$, $\sigma_a(h \cdot (m,\mes) \cdot v)[2] = \epsilon$ if for
  all $b \fleche a$, $\mes(b) = \epsilon$ and $\sigma_a(h \cdot
  (m,\mes) \cdot v)[2] = \id_d$ if there is $b \fleche a$ such that
  $\mes(b) = \id_d$; note that this is well defined since at most one
  id can be transmitted.
\end{enumerate}
The first condition says that, as long as a deviation is not visible,
then no message needs to be sent;
the second condition says that as soon as a deviation becomes visible,
then messages denouncing the deviator should be sent by ``those who
know'', that is, the (immediate) neighbours of the deviator; the third
condition says that the name (actually, the id) of the deviator should
propagate according to the communication graph in an epidemic way.

Note that the profiles discussed in Section~\ref{subsec:ex} 
were actually normed.


\begin{restatable}{theorem}{coro}
  \label{coro}
  The existence of a Nash equilibrium $\sigma_\Agt$ with payoff $p$ is
  equivalent to the existence of a normed strategy profile
  $\sigma'_\Agt$ with payoff $p$, which is resistant to immediately
  visible and honest single-player deviations.
\end{restatable}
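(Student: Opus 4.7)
I prove both directions of the equivalence.

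For the forward direction ($\Rightarrow$), given a Nash equilibrium $\sigma_\Agt$ of payoff $p$, I build a normed profile $\sigma'_\Agt$ with the same payoff. The construction keeps the actions of $\sigma_\Agt$ along the main outcome and supplies messages according to the three clauses of the normed definition: $\epsilon$ while the play remains on the main outcome, $\id_d$ emitted by the direct neighbours of $d$ at the first visible step of a $d$-deviation, and $\id_d$ propagated along the communication graph thereafter. Off the main outcome, $\sigma'_\Agt$ reuses punishment continuations derived from $\sigma_\Agt$, indexed by the identity of the deviator as delivered by the $\id_d$ propagation. Because $\Tab$ ignores messages, the vertex sequence of the main outcome (and hence the payoff $p$) is preserved, and normedness holds by construction. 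For resistance, the $\id_d$-broadcast reaches each player at graph distance $\dist_G(d,a)$ steps after the first visible deviation, which is the fastest possible propagation in any $G$-compatible profile; so any punishment available in $\sigma_\Agt$ is also available in $\sigma'_\Agt$ no later, and any immediately visible and honest deviation is at least as punished as under $\sigma_\Agt$, hence not profitable since $\sigma_\Agt$ is a Nash equilibrium.

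For the backward direction ($\Leftarrow$), I assume $\sigma'_\Agt$ is normed and resistant to immediately visible and honest single-player deviations, and show by contradiction that it is a Nash equilibrium. Suppose some player $d$ admits a profitable deviation $\tau_d$, and let $\rho = \out(\sigma'_\Agt, v_0)$ and $\rho' = \out(\sigma'_\Agt[d/\tau_d], v_0)$. Because $\payoff_d$ depends only on $\vertices(\rho')$, we have $\vertices(\rho') \ne \vertices(\rho)$; let $k$ be the first index where these sequences disagree. I aim to derive from $\tau_d$ an immediately visible and honest $d$-deviation $\tau'_d$ of at least the same payoff, contradicting the resistance hypothesis. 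The candidate construction has $\tau'_d$ play $\sigma'_d$ with message $\epsilon$ on every history whose vertex projection is a prefix of the main outcome up to step $k-1$, select at step $k-1$ an action forcing a transition off the main outcome with message $\id_d$, and thereafter replay the actions of $\tau_d$ while always broadcasting $\id_d$.

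The main obstacle lies in showing that the outcome of $\sigma'_\Agt[d/\tau'_d]$ is at least as profitable as $\rho'$. The difficulty is that a neighbour of $d$ may have observed $d$'s invisibly different actions in $\rho'$, so the non-deviator moves at step $k-1$ in the two outcomes could differ and lead to distinct next vertices. I plan to exploit the normed constraint that all messages on the common vertex prefix are $\epsilon$, which narrows the information leak about $d$'s invisible deviations to $d$'s own action history, and then proceed by induction on $k$: either $\tau'_d$ as defined already exhibits a profitable immediately visible and honest deviation (direct contradiction), or any mismatch between the outcomes of $\tau_d$ and $\tau'_d$ produces a strictly smaller-index profitable deviation to which the induction hypothesis applies. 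Formalising this induction — and in particular verifying that an action at step $k-1$ can always be chosen to reach a profitable successor against the main-outcome responses of the non-deviators — is the main technical step.
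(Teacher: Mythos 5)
Your forward direction follows the same general route as the paper (keep the actions, replace the messages by the epidemic $\id_d$ scheme), but it skips the point on which the paper spends most of its effort: well-definedness. A player who has not yet received the deviator's id must play the same action after all histories she cannot distinguish, and under the normed message scheme two distinct deviations --- by the same deviator with different invisible actions, or by two different far-away deviators --- that follow the same vertex sequence collapse to the same observation, whereas the original equilibrium may have reacted to them differently (its richer messages, or the observed actions of a neighbour, may separate them). The paper handles this with a preprocessing step (canonical representatives for an equivalence $\equiv_{\sigma_\Agt}$ on deviation histories, Lemma~\ref{lmaequiv}) plus a lemma showing that after $n$ steps a player at graph distance greater than $n$ from both candidate deviators cannot tell them apart. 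Your ``fastest possible propagation'' remark gestures at the second ingredient but not the first. This gap is repairable.

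The backward direction contains the real problem: you set out to prove that the normed, resistant profile $\sigma'_\Agt$ is \emph{itself} a Nash equilibrium, and that is false in general. Normedness constrains only the \emph{messages} along histories whose vertex projection stays on the main outcome; the \emph{actions} that a neighbour $a$ of $d$ plays after observing an invisible action change by $d$ are unconstrained. So $d$ can play an invisibly different action whose sole purpose is to provoke $a$ into changing her own action at the next step; the resulting move then differs from the prescribed one in \emph{two} components and can reach a vertex unreachable by any single-player deviation from the prescribed move --- a vertex about which the resistance hypothesis says nothing and which may be arbitrarily profitable for $d$. This is exactly the obstacle you flag as ``the main technical step,'' and your induction cannot get around it: the outcomes achievable by arbitrary $d$-deviations against $\sigma'_\Agt$ may strictly contain those achievable by immediately visible honest ones, so no contradiction with resistance is available. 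The theorem only asserts \emph{existence} of a Nash equilibrium with payoff $p$, and the paper exploits this slack: Lemma~\ref{lemma1} builds a \emph{different} profile $\widetilde\sigma_\Agt$ in which, as long as the vertex sequence follows the main outcome, each non-deviator's behaviour depends only on that vertex sequence (she plays $\sigma_a(\rho_{\le r})[1]$ on every $h$ with $\vertices(h)=\vertices(\rho_{\le r})$), and in which a deviation, once visible, is answered as if it had been immediately visible and honest. Invisible deviations are thereby neutralized by construction, every deviation against $\widetilde\sigma_\Agt$ shares its vertex sequence with some immediately visible honest deviation against $\sigma'_\Agt$, and resistance transfers. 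You should replace your contradiction argument with this reconstruction.
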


The proof of this theorem is rather technical, hence postponed to
Appendix~\ref{app:reducs}, page~\pageref{app:reducs}. We only
give some intuition here. First, we explain why being resistant to
immediately visible and honest deviations is enough. Notice that as
long as the sequence of vertices follows the main outcome, then one
can simply ignore the deviation and act only when the deviation
becomes visible, in a way as if the deviator had started deviating
only at this moment. This will be enough to punish the deviator. The
``honest'' part comes from the fact that one should simply ignore the
messages sent by the deviator as it can be only in her interest to not
ignore them (if it was not, then why would she send any message at
all?).

Second, we show why no one should communicate as long as the sequence
of vertices follows the main outcome. The reason is that if no one has
deviated then any message is essentially useless, and if a deviation
has happened, as explained earlier it can just be ignored as long as
it has not become visible.

Finally, we demonstrate why the richness of the communication
mechanism is in a way useless. Intuitively, one can understand that
the only factors that should matter when playing are the sequence of
the vertices that have been visited (because payoff functions only
take into account the visited vertices) and the identity of the
deviator. Thus the messages should only be used so that players can
know of the identity of the deviator in the fastest possible way, and
we show that nothing is faster than a sort of epidemic mechanism where
one simply broadcasts the identity of the deviator whenever one
received the information.




\section{The epistemic game abstraction}
\label{sec:epistemic}

We fix a concurrent game
$\calG = \tuple{V,v_\init,\Act,\Agt,\Sigma,\Allow,\Tab, (\payoff_a)_{a
    \in \Agt}}$ for the rest of the section, and $G$ be a
communication graph for $\Agt$. We will implement an epistemic
abstraction, which will track normed strategy profiles, and check that
there is no profitable immediately visible and honest single-player
deviations.

\subsection{Description of the epistemic game}

A \emph{situation} is a triple $(d,I,K)$ in $\Agt \times 2^\Agt \times
\Big(2^{\Agt}\Big)^\Agt$, which consists of a deviator $d \in \Agt$, a
list of players $I$ having received the information that $d$ is the
deviator, and a knowledge function $K$ that associates to every player
$a$ a list of suspects $K(a)$; in particular, it should be the case
that $d\in I$ and for every $a \in I$, $K(a) = \{d\}$. We write
$\Ssit$ for the set of situations.

\medskip The epistemic game $\calE_\calG^G$ of $\calG$ and $G$ is
defined as a two-player game structure
$\tuple{S,S_\Eve,S_\Adam,s_\init,\Sigma',\Allow',\Tab'}$.  We describe
the states and the transitions leaving those states; in particular,
components $\Sigma'$, $\Allow'$, $\Tab'$ of the above tuple will only
be implicitely defined.

\Eve's states $S_\Eve$ consist of elements of $V \times 2^\Ssit$ such
that if $(v,X)$ is a state then for all $a \in \Agt$ the set
$\{(d,I,K)\in X \mid d=a\}$ is either a singleton or empty (there is
at most one situation associated with a given player $a$). We write
$\dev(X)$ the set $\{d \in \Agt \mid \exists (d,I,K) \in X\}$ of
agents which are a deviator in one situation of $X$. If $d \in
\dev(X)$, we write $(d,I^X_d,K^X_d)$ for the unique triple belonging
to $X$ having deviator $d$. Hence, $X = \{(d,I^X_d,K^X_d) \mid d \in
\dev(X)\}$. Intuitively, an \Eve's state $(v,X)$ will correspond to a
situation where the game has proceeded to vertex $v$, but, if $\dev(X)
\ne \emptyset$, several players may have deviated. Each player $d \in
\dev(X)$ may be responsible for the deviation; some people will have
received a message denouncing $d$ (those are in the set $I^X_d$), and
some will deduce things from what they observe (this is given by
$K^X_d$). Note that the (un)distinguishability relation of a player
$a$ will be deduced from $X$: if $d$ deviated and $a \in I^X_d$, then
$a$ will know $d$ deviated; if $a$ is neither in $I^X_d$ nor in
$I^X_{d'}$, then $a$ will not be able to know whether $d$ or $d'$
deviated (as we will prove later, in Lemma~\ref{lma}).

\smallskip First let us consider the case where $X=\emptyset$, which
is to be understood as the case where no deviation has arisen yet. In
state $(v,\emptyset)$, \Eve's actions are moves in $\calG$ enabled in
$v$. When she plays move $m \in \Act^\Agt$, the game progresses to
\Adam's state $((v,\emptyset),m) \in S_\Adam$ where \Adam's actions
are vertices $v' \in V$ such that there exists a player $d \in \Agt$
and an action $\delta \in \Act$ such that
$\Tab(v,(m[d/\delta]))=v'$. When \Adam plays $v'$, either
$v'=\Tab(v,m)$ and the game progresses to \Eve's state
$(v',\emptyset)$ or $v'\neq \Tab(v,m)$ and the game progresses to
\Eve's state $(v',X')$ where:
\begin{itemize}
\item $d \in \dev(X')$ if and only if there is $\delta \in \Act$ such
  that $\Tab(v,(m[d/\delta]))=v'$. It means that given the next state
  $v'$, $d$ is a possible deviator;
\item if $d \in \dev(X')$, then:
  \begin{itemize}
  \item $I^{X'}_d = \{a \in \Agt \mid d\fleche a\}$;
  \item for every $a \in I_d^{X'}$, $K_d^{X'}(a) = \{d\}$;
  \item for every $a \notin I^{X'}_d$, $K^{X'}_d(a) = \{b\in \Agt \mid
    \exists \beta \in \Act\ \text{s.t.}\ \Tab(v,(m[b/\beta])) = v'\}
    \setminus \{b\in \Agt \mid b \fleche a\}$.
    Those are all the players that can be suspected by $a$, given the
    vertex $v'$, and the absence of messages so far.
  \end{itemize}
\end{itemize}
We write $X' = \update((v,\emptyset),m,v')$. Note that $X' =
\emptyset$ whenever (and only when) $\Tab(v,m) = v'$.

\smallskip In a state $(v,X) \in S_\Eve$ where $X\neq \emptyset$,
\Eve's actions consist of functions from $\dev(X)$ to $\Act^\Agt$ that
are compatible with players' knowledge, that is: $f: \dev(X) \to
\Act^\Agt$ is an action enabled in $(v,X)$ if and only if (i) for all
$d \in \dev(X)$, for each $a \in \Agt$, $f(d)(a) \in \Allow(v,a)$,
(ii) for all $d,d' \in \dev(X)$, for all $a \in \Agt$, if $a \notin
I^X_d \cup I^X_{d'}$ and $K^X_d(a)=K^X_{d'}(a)$ then $f(d)(a) =
f(d')(a)$;\footnote{Note in particular that ``$K^X_d(a)$ singleton''
  does not imply $a \in I_d^X$, those are two distinguishable
  situations: the message with the identity of the deviator may not
  have been received in the first case, while it has been received in
  the second case.} that is, if a player has not received any message
so far but has the same knowledge about the possible deviators in two
situations, then \Eve's suggestion for that player's action must be
the same in both situations. When \Eve plays action $f$ in $(v,X)$,
the next state is $((v,X),f) \in S_\Adam$, where \Adam's actions
correspond to states of the game that are compatible with $(v,X)$ and
$f$, that is states $v'$ such that there exists $d \in \dev(X)$ and
$\delta\in \Act$ such that $\Tab(v,f(d)[d/\delta])=v'$.

When \Adam chooses the action $v'$ in $((v,X),f)$, the game progresses
to \Eve's state $(v',X')$, where:
\begin{itemize}
\item $d \in \dev(X')$ if and only if $d\in \dev(X)$ and there exists
  $\delta \in \Act$ such that $\Tab(v,f(d)[d/\delta])=v'$. It
  corresponds to a case where $d$ was already a possible deviator and
  can continue deviating so that the game goes to $v'$;
\item if $d \in \dev(X')$, then:
  \begin{itemize}
  \item $I_d^{X'} = I_d^X \cup \{a \in \Agt \mid \exists b \in I_d^X\
    \text{s.t.}\ b \fleche a\}$. New players receive a message with
    the deviator id;
  \item for every $a \in I_d^{X'}$, $K_d^{X'}(a) = \{d\}$;
  \item for every $a \notin I_d^{X'}$, $K_d^{X'}(a) = \{b \in K_d^X(a)
    \mid \exists \beta \in \Act\ \text{s.t.}\
    \Tab(v,f(b)[b/\beta])=v'\} \setminus \{c \in \Agt \mid
    \dist_G(c,a) \le \max \{\dist_G(c,c') \mid c' \in I_c^X\} +1\}$.
    Those are the players that could have deviated but for which
    player $a$ would not have received the signal yet.
  \end{itemize}
\end{itemize}
We write $X' = \update((v,X),f,v')$. Note that $X' \ne \emptyset$ and
that $\dev(X') \subseteq \dev(X)$.

We let $R = (v_0,X_0) \cdot ((v_0,X_0),f_0) \cdot (v_1,X_1) \dots$ be
an infinite play from $(v_0,X_0) = (v_\init,\emptyset)$.  We write
$\visited(R)$ for $ v_0 v_1 \dots \in V^\omega$ the sequence of
vertices visited along $R$.  We also define $\dev(R) = \emptyset$ if
$X_r = \emptyset$ for every $r$, and $\dev(R) =\lim_{r \to +\infty}
\dev(X_r)$ otherwise. This is the set of possible deviators along $R$.

\subsubsection{Winning condition of \Eve.}

A zero-sum game will be played on the game structure $\calE_\calG^G$,
and the winning condition of \Eve will be given on the branching
structure of the set of outcomes of a strategy for \Eve, and not
individually on each outcome, as standardly in two-player zero-sum
games. We write $s_\init = (v_\init,\emptyset)$ for the initial
state. Let $p = (p_a)_{a \in \Agt} \in \bbR^\Agt$, and $\zeta$ be a
strategy for \Eve in $\calE_\calG^G$; it is said winning for $p$ from
$s_\init$ whenever $\payoff(\visited(R)) = p$, where $R$ is the unique
outcome of $\zeta$ from $s_\init$ where \Adam complies to \Eve's
suggestions, and for every other outcome $R'$ of $\zeta$, for every
$d \in \dev(R')$, $\payoff_d(\visited(R')) \le p_d$.

\subsection{An example}

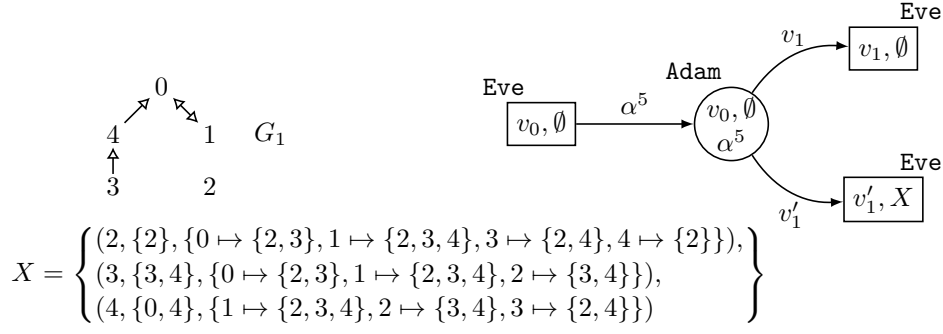
\begin{figure}[t]
\centering
 \begin{tikzpicture}[-latex, auto, node distance = 2 cm and 2 cm, on
   grid, semithick, rectnode/.style={draw, rectangle},
   rondnode/.style={draw, circle, inner sep=0pt}, every text node part/.style={align=center}]
  \node[rectnode] (init) {$v_0,\emptyset$};
  \node[rondnode] (Adam) [right = 2.5cm of init] {$v_0,\emptyset$ \\ $\alpha^5$};
  \node[rectnode] (complyEve) [above right = 1cm and 2cm of Adam] {$v_1,\emptyset$};
  \node[rectnode] (notComplyEve) [below right = 1cm and 2cm of Adam]
  {$v'_1,X$};
  \node (initStamp) [above left = 0.7 cm of init] {\Eve};
  \node (AdamStamp) [above left = 0.7 cm and .5 cm of Adam] {\Adam};
  \node (complyEveStamp) [above right = 0.7 cm of  complyEve] {\Eve};
  \node (notComplyEveStamp) [above right = 0.7 cm of notComplyEve] {\Eve};
  
  \path (init) edge node [above,midway] {$\alpha^5$} (Adam) ;
  \path (Adam) edge[bend left] node [above,midway] {$v_1$} (complyEve);
  \path (Adam) edge[bend right] node [below,midway] {$v'_1$}
  (notComplyEve);

    \begin{scope}[xshift=-5cm,yshift=.5cm,inner sep=1.5pt,
      >={Latex[length=1.5mm,width=2mm,angle'=60, open]}]
      \node (0) [] {$0$};
      \node (1) [below right = .9 of 0] {$1$};
      \node (2) [below = .7 of 1] {$2$};
      \node (4) [below left = .9 of 0] {$4$}; 
      \node (3) [below = .7 of 4] {$3$};
      \node (G1) [right=.8 of 1] {$G_1$};
      \path[draw,<->] (0)--(1);
      \path[draw, ->] (3)--(4);
      \path[draw,->] (4)--(0);      
    \end{scope}

    \begin{scope}[yshift=-2cm,xshift=-2cm]
      \node (0,0) {$X = \left\{\begin{array}{@{}l@{}} (2,\{2\},\{0\mapsto\{2,3\},1 \mapsto
     \{2,3,4\},3\mapsto \{2,4\},4\mapsto\{2\}\}), \\
     (3,\{3,4\},\{0\mapsto \{2,3\},1\mapsto\{2,3,4\},2\mapsto
                                 \{3,4\}\}), \\
                                 (4,\{0,4\},\{1 \mapsto
                                 \{2,3,4\},2\mapsto\{3,4\},3 \mapsto
                                 \{2,4\}\}) \end{array}\right\}$};
    \end{scope}

 \end{tikzpicture}
 \caption{Part of the epistemic game corresponding to the game
   described in Figure \ref{gameex} (with graph $G_1$).  This does not
   represent the whole epistemic game and a lot of actions accessible
   in the states we show here are not written.  In situations
   $(d,I,K)$ we describe $K$ by the list of its values for players
   $a\notin I$, as for all $a$ in $I$ we have $K(a)=\{d\}$ by
   definition.}
 \label{epistemicex}
\end{figure}

In Figure~\ref{epistemicex} we present a part of the epistemic game
corresponding to the game we described in Figure \ref{gameex} with
graph $G_1$. In state $(v_0,\emptyset)$, \Eve can play the action
profile $\alpha^5$ and make the game go to
$((v_0,\emptyset),\alpha^5)$ where \Adam can either play $v_1 =
\Tab(v_0,\alpha^5)$ (we say that \Adam complies with \Eve) or choose a
different state accessible from $v_0$ and an action profile that
consists in a single-player deviation from $\alpha$, for instance
$v'_1=\Tab(v_0,\alpha^2\beta\alpha^2) = \Tab(v_0,\alpha^3\beta\alpha)
=\Tab(v_0,\alpha^4\beta) $. If \Adam chooses $v'_1$, then three
players are possible deviators: $2$, $3$ and $4$. We write $X$ for the
corresponding set of situations, and we already know that $\dev(X) =
\{2,3,4\}$ .
\begin{itemize}
\item If player $2$ is the deviator, then no one (except himself)
  directly receives this information. Player $0$ knows that player $4$
  did not deviate (since $4 \fleche 0$ in $G_1$), hence
  $K_2^X(0)=\{2,3\}$; Player $1$ has no information hence $K_2^X(1) =
  \{2,3,4\}$; Player $3$ knows that he is not the deviator but cannot
  know more, hence $K_2^X(3)=\{2,4\}$; Finally, player $4$ can deduce
  many things: he knows he is not the deviator, and he saw that player
  $3$ is not the deviator (since $3 \fleche 4$ in $G_1$), hence
  $K_2^X(4) = \{2\}$.
\item If player $3$ is the deviator, then both players $3$ and $4$ get
  the information, hence $I^X_3=\{3,4\}$. Other players can guess some
  things, for instance player $0$ sees that player $4$ cannot be the
  deviator, this is why $K^X_3(0) = \{2,3\}$. Etc.
\item The reasoning for player $4$ is similar.
\end{itemize}
In the situation we have just described, when the game will proceed to
$v'_1$, then either player $0$ knows that player $4$ has deviated, or
he knows that player $4$ didn't deviate but he suspects both $2$ and
$3$. On the other hand, player $4$ will precisely know who
deviated. And player $3$ knows whether he deviated or not, but if he
didn't, then he cannot know whether it was player $2$ or player $4$
who deviated. This knowledge is stored in situation $X$ we have
described, and which is fully given in Figure~\ref{epistemicex}.

\smallskip Let us now illustrate how actions of \Eve are defined
  in states with a non-empty set of situations. Assume we are in
  \Eve's state $(v_0,X)$, with $X$ as previously defined. From that
  state, an action for \Eve is a mapping
  $f : \{2,3,4\} \to \Act^\Agt$ such that:
  \[
    f(2)(0) = f(3)(0) \quad f(2)(1) = f(3)(1) = f(4)(1) \quad f(3)(2) =
    f(4)(2) \quad f(2)(3) = f(4)(3)
  \]
  The intuition behind these constraints is the following: Player $0$
  knows whether Player $4$ deviated or not, but in the case she did not
  cannot know whether Player $2$ or Player $3$ deviated; Player $1$ does 
  not know who deviated, hence should play the same action in the three 
  cases (that she cannot distinguish); Player $2$ does only know whether 
  she deviated hence in the case she did not cannot know whether Player $3$
  or Player $4$ deviated; the case for Player $3$ is similar; 
  finally Player $4$ knows for sure who deviated: she saw if Player $3$ 
  deviated and knows whether she herself deviated, thus can distinguish
  between the three cases.

\subsection{Correctness of the epistemic game construction}

When constructing the epistemic game, we mentioned that \Eve's states
will allow to properly define the undistinguishability relation for
all the players. Towards that goal, we show by an immediate induction
the following result:

\begin{restatable}{lemma}{lma}\label{lma}
  If $(v,X)$ is an \Eve's state reachable from some $(v_0,\emptyset)$ in
  $\calE_\calG^G$, then for all $d \in \dev(X)$:
  \begin{itemize}
  \item for all $a \in I^X_d$, $K^X_d(a) = \{d\}$;
  \item for all $a \notin I^X_d$,
    $K^X_d(a) = \dev(X) \setminus \{d' \in\dev(X) \mid a \in I^X_{d'}\}$.
  \end{itemize}
  In particular, for all $d,d' \in \dev(X)$, for all
  $a \notin I^X_d \cup I^X_{d'}$, $K^X_d(a) = K^X_{d'}(a)$.
\end{restatable}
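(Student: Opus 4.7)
The plan is to proceed by induction on the length of a play $R$ from $s_\init = (v_\init,\emptyset)$ that reaches the \Eve-state $(v,X)$. The base case is trivial because $\dev(\emptyset)=\emptyset$ makes both bullets vacuous. For the inductive step I let $(u,Y)$ denote the predecessor \Eve-state of $(v,X)$ along $R$, and distinguish two cases according to whether $Y$ is empty (a first-time deviation, handled by the first update rule) or non-empty (a continuation, handled by the second update rule). In both cases the first bullet ($K^X_d(a) = \{d\}$ whenever $a \in I^X_d$) is immediate, since the value is written directly into the definition of $\update$.

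For the second bullet in the first-time-deviation case, the update rule gives $I^X_d = \{a \mid d \fleche a\}$ and, for $a \notin I^X_d$, $K^X_d(a) = \dev(X) \setminus \{b \mid b \fleche a\}$; since under this rule $d' \fleche a$ is exactly the condition $a \in I^X_{d'}$, the two descriptions of $K^X_d(a)$ coincide.

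The continuation case is the heart of the proof. I would first strengthen the induction with an auxiliary claim: for every reachable $(v,X)$ and every $d \in \dev(X)$, $I^X_d$ is the closed ball in $G$ of radius $n^X_d := \max\{\dist_G(d,c') \mid c' \in I^X_d\}$ centred at $d$. This is discharged by a parallel induction using the rule $I^X_d = I^{Y}_d \cup \{a \mid \exists b \in I^{Y}_d,\ b \fleche a\}$, which turns a ball of radius $n$ into the ball of radius $n+1$ (or leaves it unchanged when no node at distance $n+1$ exists). A consequence used immediately below is that, for $c \in \dev(X)$, $a \in I^X_c$ is equivalent to $\dist_G(c,a) \le n^{Y}_c + 1$. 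With this in hand, I unfold the update for $K^X_d(a)$ when $a \notin I^X_d$: the first filter $\{b \in K^{Y}_d(a) \mid \exists \beta,\ \Tab(u,f(b)[b/\beta])=v\}$ equals $K^{Y}_d(a) \cap \dev(X)$, and the induction hypothesis applied to $K^{Y}_d(a)$ rewrites this as $\dev(X) \setminus \{d'' \in \dev(X) \mid a \in I^{Y}_{d''}\}$. Subtracting the set $\{c \mid \dist_G(c,a) \le n^{Y}_c + 1\}$ then removes exactly those $c \in \dev(X)$ with $a \in I^X_c$ by the auxiliary claim, and since $a \in I^{Y}_{d''}$ already forces $a \in I^X_{d''}$, the two removed sets merge into $\{d' \in \dev(X) \mid a \in I^X_{d'}\}$, giving the desired equality. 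The ``in particular'' clause is then immediate: for $a \notin I^X_d \cup I^X_{d'}$, both $K^X_d(a)$ and $K^X_{d'}(a)$ reduce to the same expression $\dev(X) \setminus \{e \in \dev(X) \mid a \in I^X_e\}$, which does not mention $d$ or $d'$.

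The main obstacle I expect is cleanly formulating and discharging the ball-structure auxiliary claim, and matching the distance-based subtraction in the update for $K^X_d(a)$ with the layer-by-layer graph-expansion rule defining $I^X_d$; once this correspondence is spelled out, everything else reduces to routine set-theoretic bookkeeping.
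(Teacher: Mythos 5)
Your proof is correct and follows essentially the same route as the paper, which disposes of this lemma with a one-line ``immediate induction on the structure of the epistemic game''; you have simply filled in the details of that induction, splitting on whether the predecessor state has an empty set of situations and unfolding the two $\update$ rules. Your auxiliary ball-structure claim (that $I^X_d$ is the closed ball of radius $r$ around $d$ after $r$ steps of visible deviation) is exactly the characterization the paper records separately as Lemma~\ref{coro:small} in the complexity section, so it is consistent with, and indeed anticipated by, the authors' own bookkeeping.
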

So, either a player $a$ will have received from a neighbour the
identity of the deviator, or she will not have received any deviator
identity yet, and she will have a set of suspected deviators that she
will not be able to distinguish.

This allows to deduce the following correspondence between $\calG$ and
$\calE_\calG^G$:

\begin{proposition}
  There is a winning strategy for \Eve in $\calE_\calG^G$ for payoff
  $p$ if and only if there is a normed strategy profile in $\calG$,
  whose main outcome has payoff $p$ and which is resistant to
  single-player immediately visible and honest deviations.
\end{proposition}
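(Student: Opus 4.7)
The plan is to establish both implications by building an explicit translation between \Eve's strategies in $\calE_\calG^G$ and normed strategy profiles in $\calG$, using Lemma~\ref{lma} as the bridge between \Eve's structural constraint (``same knowledge $\Rightarrow$ same suggestion'') and the $G$-compatibility of individual strategies.

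For the direction ``winning strategy $\Rightarrow$ normed profile'', I would start from a winning strategy $\zeta$ for \Eve of payoff $p$. Inductively along any history $h$ in $\calG$, I associate an \Eve-state $(\last(h),X_h)$ in $\calE_\calG^G$: when $h$ stays on the main outcome one uses the compliant transitions, and when $h$ leaves the main outcome through a single-player visible deviation one selects the corresponding \Adam-action (the vertex actually reached). Using this association, for each player $a$ I define $\sigma_a(h)[1]$ as follows: if $X_h = \emptyset$ then $\sigma_a(h)[1]$ is the $a$-component of the move $\zeta$ suggests; otherwise $\sigma_a(h)[1]$ is the $a$-component of $\zeta(\last(h),X_h)(d)$ for any $d \in \dev(X_h)$ with $a \notin I^{X_h}_d$. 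Well-definedness requires precisely the constraint (ii) imposed on \Eve's actions, combined with Lemma~\ref{lma}. The messages $\sigma_a(h)[2]$ are set so as to fit the three clauses of the normed definition. The $G$-compatibility of $\sigma_a$ amounts to showing that $h \sim_a h'$ implies $X_h$ and $X_{h'}$ give $a$ the same information, which is again Lemma~\ref{lma}. Finally, any immediately visible honest single-player deviation $\sigma'_d$ produces an outcome whose vertex sequence is exactly $\visited(R')$ for some non-compliant outcome $R'$ of $\zeta$ with $d \in \dev(R')$; the winning condition gives $\payoff_d(\vertices(\out(\sigma_\Agt[d/\sigma'_d]))) \le p_d$, so $\sigma'_d$ is not profitable.

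For the converse, given a normed profile $\sigma_\Agt$ of payoff $p$ resistant to immediately visible honest single-player deviations, I define $\zeta$ by the symmetric translation. At an \Eve-state $(v,\emptyset)$ reached by the compliant branch, whose associated $\calG$-history is the prefix $h$ of the main outcome of $\sigma_\Agt$ ending at $v$, set $\zeta(v,\emptyset) = \sigma_\Agt(h)[1]$. At an \Eve-state $(v,X)$ with $X \ne \emptyset$, the play in $\calE_\calG^G$ encodes a first deviation vertex and the \Adam-moves made after it; for each $d \in \dev(X)$ one reconstructs the unique immediately visible honest $d$-deviation $\sigma'_d$ whose outcome matches the vertex sequence seen so far, and lets $\zeta(v,X)(d)$ be the move $\sigma_\Agt[d/\sigma'_d]$ plays at the corresponding $\calG$-history. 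The constraint (ii) follows from $G$-compatibility of $\sigma_a$ (for $a \ne d, d'$) combined with Lemma~\ref{lma}, which guarantees that the $a$-histories corresponding to the two $\calE_\calG^G$ branches are $\sim_a$-equivalent whenever $a \notin I^X_d \cup I^X_{d'}$ and $K^X_d(a) = K^X_{d'}(a)$. The compliant outcome of $\zeta$ has vertex sequence equal to $\vertices(\out(\sigma_\Agt))$, hence payoff $p$; every other outcome $R'$ corresponds, for each $d \in \dev(R')$, to a single-player immediately visible honest deviation $\sigma'_d$, and resistance gives $\payoff_d(\visited(R')) \le p_d$.

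The routine part is the mutual simulation of runs; the delicate point I would spend most care on is checking that the two constraints that look syntactically different~---~$G$-compatibility of $\sigma_a$ in $\calG$ versus \Eve's constraint (ii) on actions in $\calE_\calG^G$~---~are translated into each other correctly, and that messages are handled consistently with the three normed clauses. This is exactly the content of Lemma~\ref{lma}: a player either has already received an id (and then knows the deviator) or has a set of suspects equal to $\dev(X)\setminus\{d'\mid a\in I^X_{d'}\}$, which is a function of the epistemic state and symmetric across the alternative situations she cannot distinguish. With this in hand, the bookkeeping reduces to a direct verification that each of the three clauses in the definition of ``normed'' corresponds step-by-step to the update rules for $I^X_d$ and $K^X_d$ in $\calE_\calG^G$.
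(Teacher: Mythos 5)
Your overall architecture coincides with the paper's: a history-by-history translation in each direction, with Lemma~\ref{lma} plus \Eve's constraint~(ii) doing the work of matching $G$-compatibility in $\calG$ against the epistemic structure, and deviations mapped to non-compliant outcomes. However, the definition of $\sigma_a(h)$ in your forward direction has a genuine gap. When $X_h \ne \emptyset$ you set $\sigma_a(h)[1]$ to the $a$-component of $\zeta(\cdot)(d)$ ``for any $d \in \dev(X_h)$ with $a \notin I^{X_h}_d$''. This only covers players who have \emph{not} yet received the deviator's identity. If $a$ has received $\id_{d_0}$ from some $b\in\Vois(a)$ (readable off $h$, since messages are part of $\pi_a(h)$), the correct action is $\zeta(\cdot)(d_0)(a)$ for that specific $d_0$: this is exactly where punishment gets coordinated, and constraint~(ii) gives no relation between $f(d_0)(a)$ and $f(d)(a)$ once $a \in I^{X_h}_{d_0}$. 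Worse, your rule can be vacuous: it may happen that $a \in I^{X_h}_d$ for \emph{every} $d \in \dev(X_h)$, in which case there is no admissible $d$ to pick. The paper's construction branches explicitly on whether a neighbour sent a nonempty message at the last step and uses the received id to select $d$. Without that branch, the outcome of $\sigma_\Agt[d_0/\sigma'_{d_0}]$ no longer tracks the non-compliant outcome $R'$ of $\zeta$ associated with deviator $d_0$, and your final resistance argument does not go through. The fix is local and consistent with your own closing paragraph (where you correctly state the dichotomy), but as written the definition is partial and, where defined, prescribes the wrong action for informed players.

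Two smaller points. You associate an \Eve-\emph{state} $(\last(h),X_h)$ with each $h$, but $\zeta$ is a function of epistemic \emph{histories}; your inductive construction in fact builds the whole history, so this is only notation, but it should be said (the paper's $E_\zeta$ maps $a$-histories to full epistemic histories, subject to invariants on the messages). In the converse direction, the ``unique'' immediately visible honest $d$-deviation matching a given vertex sequence need not be unique: several actions $\delta$ may realize the same transition $\Tab(v,f(d)[d/\delta])=v'$. The paper fixes an arbitrary total order on $\Act$ and always picks the smallest witness, which makes $\Lambda$ (hence $\zeta$) well defined; you need some such canonical choice.
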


\smallskip The proof of correctness of the epistemic game then goes
through the following steps, which are detailed in
Appendix~\ref{sec:correctness}, page~\pageref{sec:correctness}. First,
given an $\Eve$'s strategy $\zeta$, we build a function $E_\zeta$
associating with $a$-histories (for every $a \in \Agt$) in the
original game $\Eve$'s histories in the epistemic game such that
$\Eve$ plays according to $\zeta$ along $E_\zeta$.

Then we use this function to create a strategy profile $\Omega(\zeta)$
for the original game where the action prescribed by this profile to
player $a$ after history $h$ corresponds in some sense to
$\zeta(E_\zeta(h))(d)(a)$, where $d$ is a suspected deviator according
to player~$a$. This works because, thanks to Lemma~\ref{lma}, we know
that either player $a$ knows who the deviator is, or player $a$ has a
subset of suspect deviators and $\Eve$'s suggestion for $a$ (by
construction of $\calE_\calG^G$) is the same for all those possible
deviators.

Finally we prove that if $\zeta$ is a winning strategy for $\Eve$ then
$\Omega(\zeta)$ is both normed and resistant to single-player
immediately visible and honest deviations in $\calG$.

\smallskip To prove the converse proposition we build a function
$\Lambda$ associating with $\Eve$'s histories in the epistemic game
families of single-player histories in the original game. We then use
this correspondence to build a function $\Upsilon$ associating with
normed strategy profiles $\Eve$'s strategies in a natural way.

Finally we prove that if $\sigma$ is normed and resistant to
single-player immediately visible honest deviations, then
$\Upsilon(\sigma)$ is a winning strategy for $\Eve$.


\medskip Gathering results of Theorem~\ref{coro} and of this
proposition, we get the following theorem:

\begin{restatable}{theorem}{main}
  \label{theo:main}
  There is a Nash equilibrium with payoff $p$ in $\calG$ if and only
  if there is a winning strategy for \Eve in $\calE_\calG^G$ for
  payoff $p$.
\end{restatable}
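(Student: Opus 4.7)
My plan is to obtain Theorem~\ref{theo:main} as a direct chaining of Theorem~\ref{coro} and the preceding Proposition, both of which are payoff-preserving biconditionals. For the forward direction, I will assume a Nash equilibrium $\sigma_\Agt$ in $\calG$ with payoff $p$, apply Theorem~\ref{coro} to obtain a normed strategy profile $\sigma'_\Agt$ in $\calG$ of payoff $p$ which is resistant to immediately visible and honest single-player deviations, and then feed this profile into the right-to-left direction of the Proposition to produce a winning strategy for \Eve in $\calE_\calG^G$ for payoff $p$.

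For the converse, I will start from a winning strategy for \Eve in $\calE_\calG^G$ for payoff $p$. The left-to-right direction of the Proposition will hand me a normed strategy profile in $\calG$ whose main outcome has payoff $p$ and which is resistant to immediately visible and honest single-player deviations. Applying the right-to-left direction of Theorem~\ref{coro} then yields a Nash equilibrium of payoff $p$ in $\calG$, closing the chain.

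The only point to check is that the payoff vector $p$ is tracked exactly through both translations, which is explicit in the wording of both intermediate results (each of them is phrased as ``\dots with payoff $p$ \dots''), so no loss or shift occurs when composing the two biconditionals. There is no real obstacle at this stage: the substantive work is already concentrated in the proofs of Theorem~\ref{coro} (the reduction to the normed communication pattern, carried out in Appendix~\ref{app:reducs}) and of the Proposition (the construction of the two dual encodings $\Omega$ and $\Upsilon$ between normed profiles and \Eve's strategies in Appendix~\ref{sec:correctness}, which crucially rely on Lemma~\ref{lma} to justify that \Eve's action suggestions are well-defined on undistinguishable suspected deviators). The final theorem itself is then a one-line composition of biconditionals.
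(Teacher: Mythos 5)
Your proposal matches the paper's own proof exactly: the paper derives Theorem~\ref{theo:main} by composing Theorem~\ref{coro} with the Proposition (itself established via Lemmas~\ref{sens1} and~\ref{sens2}, the two directions of the $\Omega$/$\Upsilon$ correspondence), with the payoff $p$ carried through each biconditional. No differences to note.
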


\begin{remark}
  Note that all the results are constructive, hence if one can
  synthesize a winning strategy for \Eve in $\calE_\calG^G$, then one
  can synthesize a correponding Nash equilibrium in $\calG$.
\end{remark}

\section{Complexity analysis}
\label{sec:complexity}

We borrow all notations of previous sections. A rough analysis of the
size of the epistemic game $\calE_\calG^G$ gives an exponential
bound. We will give a more precise bound, pinpointing the part with an
exponential blowup.
We write $\diam(G)$ for the diameter of $G$, that is $\diam(G) = \max
\{\dist_G(a,b) \mid \dist_G(a,b) <+\infty\}$.



\begin{restatable}{lemma}{size}
  Assuming that $\Tab$ is given explicitely in $\calG$, the number of
  states in the reachable part of $\calE_\calG^G$ from $s_\init =
  (v_\init,\emptyset)$ is bounded by
  \[
  |S_\Eve| \le |V| + |V| \cdot |\Tab|^2 \cdot (\diam(G)+2)\qquad
  \text{and} \qquad |S_\Adam| \le |S_\Eve| \cdot |\Act|^{|\Agt|^2}
  \]
  The number of edges is bounded by $|S_\Adam| + |S_\Adam| \cdot
  |S_\Eve|$.

  If $|\Agt|$ is assumed to be a constant of the problem, then the
  size of $\calE_\calG^G$ is polynomial in the size of $\calG$.
\end{restatable}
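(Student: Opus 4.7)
The plan is to bound separately the number of \Eve-states, the number of \Adam-states, and the number of edges. For \Eve-states, I would split into two cases: those of the form $(v,\emptyset)$, which number at most $|V|$, and those of the form $(v,X)$ with $X \neq \emptyset$. The plan for the latter is to show that every such reachable state is determined by a tuple $(v, t_1, t_2, k)$, where $v \in V$ is the current vertex, $t_1, t_2 \in \Tab$ are two transitions of $\calG$ respectively recording the intended move and the move actually executed at the first step of deviation (they share a source vertex), and $k \in \{1, \ldots, \diam(G)+2\}$ is a saturated counter for the number of steps elapsed since that first deviation. The justification relies on Lemma~\ref{lma}: once $\dev(X)$ and the family $\{I^X_d\}_{d \in \dev(X)}$ are known, the $K^X_d$ components of $X$ are forced. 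The set $I^X_d$ expands by exactly one $G$-neighbourhood per round and stabilises to $\{a \mid \dist_G(d,a) \le \diam(G)\}$ once $k \ge \diam(G)+1$, explaining the $(\diam(G)+2)$ factor. Finally, $\dev(X) \subseteq \dev(X_0)$ is constrained by the initial deviation, which is itself recovered from the pair $(t_1, t_2)$; this yields the $|\Tab|^2$ factor.

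For \Adam-states, which have the form $((v,X), a)$ with $a$ an action of \Eve at $(v,X)$: when $X = \emptyset$, $a$ is a single move $m \in \Act^\Agt$, giving at most $|\Act|^{|\Agt|}$ options; when $X \neq \emptyset$, $a$ is a function $f : \dev(X) \to \Act^\Agt$, giving at most $|\Act|^{|\Agt| \cdot |\dev(X)|} \le |\Act|^{|\Agt|^2}$ options. In either case the number of successors of $(v,X)$ is bounded by $|\Act|^{|\Agt|^2}$, so $|S_\Adam| \le |S_\Eve|\cdot |\Act|^{|\Agt|^2}$. For edges, each edge from \Eve to \Adam is uniquely identified by the target \Adam-state, contributing exactly $|S_\Adam|$ edges; edges from \Adam to \Eve are bounded trivially by $|S_\Adam| \cdot |S_\Eve|$. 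The polynomial conclusion then follows by substituting $|\Tab| \le |V| \cdot |\Act|^{|\Agt|}$, $\diam(G) \le |\Agt|-1$, and observing that $|\Act|^{|\Agt|^2}$ is polynomial in $|\Act|$ when $|\Agt|$ is fixed.

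The step I expect to be the main obstacle is justifying the parametrization of reachable $X$ by $(t_1, t_2, k)$. The rule for $\dev(X')$ allows $\dev$ to shrink further as the game evolves, since deviators inconsistent with the observed next vertex and with $f$ are dropped. One must therefore argue that, for reachable states, this refinement is already captured by the initial deviation data and the age counter, rather than introducing new degrees of freedom. Using Lemma~\ref{lma} and a careful induction on the length of the play from the first deviation, one verifies that the pair $(t_1, t_2)$ together with $k$ (saturated at $\diam(G)+1$) already pins down $\dev(X)$ (as a specific subset of $\dev(X_0)$) and all sets $I^X_d$, hence $X$ itself. Everything else in the lemma is a routine counting argument once this correspondence is established.
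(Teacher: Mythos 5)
Your decomposition, your treatment of the \Adam-states, the edge count, and the polynomiality conclusion all match the paper. The problem is the key step you yourself flag as the main obstacle: the claim that a reachable state $(v,X)$ with $X\neq\emptyset$ is pinned down by $(v,t_1,t_2,k)$, i.e.\ that the initial deviation data plus the saturated age counter determine $\dev(X)$. This is false. By Lemma~\ref{lma} (and the paper's Lemma~\ref{coro:small}) the sets $I^X_d$ and $K^X_d$ are indeed forced once $\dev(X)$ and the age are known, but $\dev(X)$ itself is \emph{not} a function of the first deviation step and the elapsed time: at every subsequent round the rule ``$d\in\dev(X')$ iff $d\in\dev(X)$ and $\exists\delta$ with $\Tab(v,f(d)[d/\delta])=v'$'' eliminates deviators in a way that depends on \Eve's chosen $f$ and \Adam's chosen $v'$ at \emph{that} round. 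Since we are counting all reachable states (over all strategies of both players), two plays of the same length from the same initial deviation can reach the same vertex $v$ while retaining different subsets of $\dev(X_1)$ as surviving suspects. So your tuple $(v,t_1,t_2,k)$ conflates states that are genuinely distinct, and the induction you propose cannot close.

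The paper avoids this by spending one of the two $|\Tab|$ factors differently: a reachable state is parametrized by (current vertex, deviation starting point, \emph{the subset} $\dev(X)\subseteq\dev(X_1)$, saturated age). The number of subsets is $2^{|\dev(X_1)|}$, and this is where the extra idea comes in (borrowed from~\cite[Prop.~4.8]{BBMU15}): every $d\in\dev(X_1)$ must have at least two allowed actions at the deviation vertex $v_0$ (one to follow $m$, one to deviate to $v_1$), so the explicitly given table has at least $\prod_a|\Allow(v_0,a)|\ge 2^{|\dev(X_1)|}$ entries, whence $2^{|\dev(X_1)|}\le|\Tab|$. Combining this with one factor $|\Tab|$ for the starting point, $|V|$ for the current vertex and $\diam(G)+2$ for the age recovers the stated bound $|V|\cdot|\Tab|^2\cdot(\diam(G)+2)$. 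To repair your proof you would replace the second transition $t_2$ in your tuple by the surviving subset $\dev(X)$ and supply this $2^{|\dev(X_1)|}\le|\Tab|$ estimate; as written, the argument has a genuine gap.
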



We will not detail algorithmics issues, but the winning condition of
\Eve in $\calE_\calG^G$ is very similar to the winning condition of
\Eve in the suspect-game construction of~\cite{BBMU15} (for Boolean or
ordered objectives), or in the deviator-game construction
of~\cite{brenguier16} (for mean-payoff), or in a closer context to the
epistemic-game construction of~\cite{bouyer18}. Hence, when the size
of the epistemic game is polynomial, rather efficient algorithms can
be designed to compute Nash equilibria. For instance, in a setting
where the size of $\calE_\calG^G$ is polynomial, using a bottom-up
labelling algorithm similar to that of~\cite[Sect. 4.3]{bouyer17}, one
obtains a polynomial space algorithm for deciding the (constrained)
existence of a Nash equilibrium when payoffs are Boolean payoffs
corresponding to parity conditions.




\section{Conclusion}
\label{sec:conclusion}

In this paper, we have studied multiplayer infinite-duration games
over graphs, and focused on games where players can communicate with
neighbours, given by a directed graph. We have shown that a very
simple communication mechanism was sufficient to describe Nash
equilibria. This mechanism is sort of epidemic, in that if a player
deviates, then his neighbours will see it and transmit the information
to their own neighbours; the information then propagates along the
communication graph. This framework encompasses two standard existing
frameworks, one where the actions are invisible (represented with a
graph with no edges), and one where all actions are visible
(represented by the complete graph). We know from previous works that
in both frameworks, one can compute Nash equilibria for many kinds of
payoff functions. In this paper, we also show that we can compute Nash
equilibria in this generalized framework, by providing a reduction to
a two-player game, the so-called epistemic game construction. Winning
condition in this two-player game is very similar to winning
conditions encountered in the past, yielding algorithmic solution to
the computation of Nash equilibria. We have also analyzed the size of
the abstraction, which is polynomial when the number of players is
considered as a constant of the problem.

The current framework assumes messages can be appended to actions by
players, allowing a rich communication between players. The original
framework of~\cite{RT98} did not allow additional messages, but did
encode identities of deviators by sequences of actions. This was
possible in~\cite{RT98} since games were repeated matrix games, but it
is harder to see how we could extend this approach and how we could
encode identities of players with actions, taking into account the
graph structure. For instance, due to the graph, having too long
identifiers might be prohibitive to transmit in a short delay the
identity of the deviator. Nevertheless, that could be interesting to
see if something can be done in this framework.



\appendix

\section{Proof of Section~\ref{sec:reducs}}
\label{app:reducs}

\coro*

We will divide the proof of this theorem into several parts and prove several intermediary results.

\subsection{Reduction to immediately visible and honest deviations}

\begin{restatable}{lemma}{lemmaun}
  \label{lemma1}
  Assume $\sigma_\Agt$ is a strategy profile from $v_0$ such that
  there is no profitable single-player immediately visible honest
  deviation w.r.t. $\sigma_\Agt$, then one can construct a Nash
  equilibrium $\widetilde\sigma_\Agt$ from $v_0$ such that (i)
  $\vertices(\out(\sigma_\Agt,v_0)) =
  \vertices(\out(\widetilde\sigma_\Agt,v_0))$; (ii) for every
  $h \in \Hist(v_0)$ such that $\vertices(h)$ is a prefix of
  $\vertices(\out(\widetilde\sigma_\Agt,v_0))$, for every
  $a \in \Agt$, $\widetilde\sigma_a(h)[2] = \epsilon$.
\end{restatable}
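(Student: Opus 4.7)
The plan is to build $\widetilde\sigma_\Agt$ which coincides with $\sigma_\Agt$ along the main outcome (with all messages stripped to $\epsilon$) and, off the main outcome, faithfully simulates the punishment that $\sigma_\Agt$ would enact against an honest immediately visible deviator. Letting $\rho = \out(\sigma_\Agt,v_0)$ and $\bar\rho = \vertices(\rho)$, for each player $a$ and history $h$ I would set
\[
\widetilde\sigma_a(h) \;:=\;
\begin{cases}
(\sigma_a(\rho_{\le|h|})[1],\,\epsilon) & \text{if $\vertices(h)$ is a prefix of $\bar\rho$,}\\
(\sigma_a(\widehat{h})[1],\,\nu_a(h)) & \text{otherwise,}
\end{cases}
\]
where, writing $k$ for the first off-main step of $h$, the \emph{honestification} $\widehat{h}$ is obtained by replacing the first $k$ steps of $h$ with $\rho_{\le k}$ and, from step $k$ onward, keeping the actions of $h$ but rewriting the messages according to the epidemic propagation of $\id_d$, for any deviator $d$ compatible with the first off-main transition; $\nu_a(h)$ is $a$'s corresponding honest message. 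Since visited vertices are public, the case distinction depends only on $\pi_a(h)$, and the honestified messages are fully determined by the vertex sequence, so $\widetilde\sigma_\Agt$ is $G$-compatible. Clauses~(i) and~(ii) of the lemma follow immediately from the on-main case.

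For the Nash property, I would fix a single-player deviation $\sigma'_d$. If $\vertices(\out(\widetilde\sigma_\Agt[d/\sigma'_d],v_0))$ stays a prefix of $\bar\rho$, the payoff is $\payoff_d(\bar\rho)$ and the deviation is not profitable. Otherwise, letting $k$ be the first off-main step, I would define $\sigma''_d$ by: up to step $k-1$, copy the action and message $\sigma_d$ prescribes along $\rho$; at step $k$, keep the action of $\sigma'_d$ but emit $\id_d$; thereafter, play the action prescribed by $\sigma'_d$ at the corresponding honestified history and always broadcast $\id_d$. Then $\sigma''_d$ is, by construction, an immediately visible honest $d$-deviation against $\sigma_\Agt$, so by hypothesis
\[
\payoff_d\bigl(\vertices(\out(\sigma_\Agt[d/\sigma''_d],v_0))\bigr)\;\le\;\payoff_d(\bar\rho).
\]
It then suffices to verify the identity $\vertices(\out(\widetilde\sigma_\Agt[d/\sigma'_d],v_0)) = \vertices(\out(\sigma_\Agt[d/\sigma''_d],v_0))$ to conclude that $\sigma'_d$ is not profitable either.

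The main obstacle is establishing this last identity. Up to step $k$ it is direct: the $\widetilde\sigma_\Agt$-non-deviators ignore $d$'s silent deviations since their action only depends on being on-main, while in the $\sigma_\Agt[d/\sigma''_d]$-world the pre-$k$ play coincides with $\rho$ because $\sigma''_d$ replays $\sigma_d$ there. From step $k$ onward one must check that the honestified $a$-projection fed into $\sigma_a$ by $\widetilde\sigma_a$ genuinely coincides with the $a$-projection of $\out(\sigma_\Agt[d/\sigma''_d],v_0)$; the delicate point is that the first off-main vertex may not uniquely identify the deviator, but the $G$-compatibility of each $\sigma_a$ ensures that, for any player $a$ who is not a neighbour of $d$, the prescribed action is insensitive to the choice of suspect used in the honestification. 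This makes the construction well-defined and yields the identity.
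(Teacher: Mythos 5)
Your overall strategy is the right one and matches the paper's: keep $\sigma_\Agt$'s moves along the main vertex sequence with empty messages, and off the main outcome feed each $\sigma_a$ a ``honestified'' history so that every deviation against $\widetilde\sigma_\Agt$ maps to an immediately visible honest deviation against $\sigma_\Agt$ with the same vertex sequence. The gap is in your definition of the honestification $\widehat h$: you rewrite \emph{all} players' messages from step $k$ onward ``according to the epidemic propagation of $\id_d$''. For your key identity $\vertices(\out(\widetilde\sigma_\Agt[d/\sigma'_d],v_0)) = \vertices(\out(\sigma_\Agt[d/\sigma''_d],v_0))$ to hold, $\widehat{h}$ must be an actual history of $\sigma_\Agt[d/\sigma''_d]$, i.e.\ the message broadcast by a non-deviator $a$ at step $j\ge k$ of $\widehat h$ must be $\sigma_a(\widehat{h_{\le j}})[2]$ --- whatever word $\sigma_a$ chooses to send, not the epidemic one. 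The hypothesis only gives resistance to honest deviations \emph{when the other players follow $\sigma_\Agt$ in full, messages included}; $\sigma_\Agt$ may rely on its own arbitrary communication scheme to coordinate the punishment, and substituting the epidemic messages can change the actions $\sigma_a(\widehat h)[1]$ and hence the visited vertices, so the identity fails. The same over-rewriting breaks your well-definedness argument: for a player $a$ at finite distance at least $2$ from two possible deviators $d\ne d'$ with $\pi_a(h)=\pi_a(h')$, the epidemic messages eventually deliver $\id_d$ to $a$'s neighbours in $\widehat h$ but $\id_{d'}$ in $\widehat{h'}$, so $\pi_a(\widehat h)\ne\pi_a(\widehat{h'})$ and the $G$-compatibility of $\sigma_a$ no longer forces $\sigma_a(\widehat h)=\sigma_a(\widehat{h'})$.

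The repair, which is what the paper does, is to honestify only the deviator: set $d$'s messages to $\id_d$ from the first visible step onward and leave every other player's messages exactly as $\sigma_\Agt$ prescribes, defining $\widetilde\sigma_\Agt$ inductively so that off-main histories of $\widetilde\sigma_\Agt[d/\sigma'_d]$ are in step-by-step correspondence with genuine histories of $\sigma_\Agt[d/\sigma''_d]$. Well-definedness then only requires comparing projections of histories that differ in the messages of $d$ resp.\ $d'$, which are invisible to any $a$ with $d,d'\notin\Vois(a)$. The replacement of all communication by the epidemic pattern is a separate, later reduction (the paper's subsequent lemma producing normed profiles), and it needs extra machinery --- a canonicalization of equivalent deviations and an induction on distances in $G$ --- precisely to show that the epidemic messages convey no less usable information than $\sigma_\Agt$'s own.
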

Condition (ii) in the statement means somehow that no message is
required along the main outcome and along deviations that are not
visible yet (in the sense that they follow the same sequence of
vertices as the main outcome).

\begin{proof}
  From $\sigma_\Agt$, we will build a profile $\widetilde\sigma_\Agt$
  such that:
  \begin{itemize}
  \item $\widetilde\sigma_\Agt$'s and $\sigma_\Agt$'s main outcomes
    visit the same sequence of vertices;
  \item no message is broadcast by $\widetilde\sigma_\Agt$ as long as
    the sequence of vertices of the main outcome is followed;
  \item any single-player $d$-deviation of $\widetilde\sigma_\Agt$
    will correspond to an immediately visible single-player honest
    $d$-deviation of $\sigma_\Agt$.
  \end{itemize}

  Let $\rho=\out(\sigma_\Agt,v_o)$. For every $h \in \Hist(v_0)$ such
  that $\vertices(h) = \vertices(\rho_{\le r})$ for some $r$, for
  every $a \in \Agt$, we define
  $\widetilde{\sigma}_a(h) = \big(\sigma_a(\rho_{\le
    r})[1],\epsilon\big)$.  As a consequence of this definition, the
  main outcome $\widetilde\rho$ of $\widetilde\sigma_\Agt$ visits the
  same sequence of vertices (and even follows the same moves) as the
  main outcome $\rho$ of $\sigma$; only messages which are broadcast
  may differ.  Thereafter, we write:
  \[
    \rho = v_0 \cdot (m_0,\mes_0) \cdot v_1 \cdot (m_1,\mes_1) \cdot
    v_2 \ldots (m_{r-1},\mes_{r-1}) \cdot v_r \ldots
  \]
  and, by definition of $\widetilde\sigma_\Agt$:
  \[
    \widetilde\rho = v_0 \cdot (m_0,\mes_\epsilon) \cdot v_1 \cdot
    (m_1, \mes_\epsilon) \cdot v_2 \ldots (m_{r-1}, \mes_\epsilon)
    \cdot v_r \ldots
  \]
  where $\widetilde\rho=\out(\widetilde\sigma_\Agt,v_0)$.  Note also
  that the above definition takes care of single-player deviations
  w.r.t. $\widetilde\sigma_\Agt$ which do not `leave' the main outcome
  (that is, which visit the same sequence of vertices): against such
  deviations, simply doing nothing and going on as if no one deviated
  will be enough to punish the deviator.

  \smallskip We will now extend the definition ot profile
  $\widetilde\sigma_\Agt$ to histories generated by single-player
  deviations. We will do so by induction, by considering histories
  that can be derived by single-player deviations of the
  so-far-defined profile, and which have become visible.

  Let $h$ be a history resulting from a single-player (named $d$)
  deviation w.r.t. $\widetilde\sigma_\Agt$, which has become visible.
  It can be decomposed as:
  \[
    h = v_0 \cdot (\bar{m}_0,\bar\mes_0) \cdot v_1 \cdot
    (\bar{m}_1,\bar\mes_1) \ldots v_r \cdot (\bar{m}_r,\bar\mes_r)
    \cdot h_1
  \]
  where $\textit{first}(h_1) \ne v_{r+1}$ (that is, the deviation
  becomes visible at that point!). Note however that it may be the
  case that player $d$ started deviating earlier, but this will not be
  taken into account.
  We remark that, by definition of $\widetilde\sigma_\Agt$ along the
  main outcome or for invisible deviations, it is the case that for
  every $0 \le s \le r$, $\bar{m}_s(-d) = m_s(-d)$ and
  $\bar\mes_s(-d) = \mes_\epsilon(-d)$.

  With these notations, we can set:
  \[
    \widetilde\sigma_a(h) = \sigma_a\big(h_{\le r} \cdot
    (\bar{m}_r,\mes_r^{+d}) \cdot h_1^{+d}\big)
  \]
  where $\mes_r^{+d}(d) = \id_d$,
  $\mes_r^{+d}(-d) = \mes_r(-d) = \epsilon$, and $h_1^{+d}$ is the
  same as $h_1$, but each message of player $d$ is set to $\id_d$. 
  In some sense, we tell the players to ignore any deviation as long
  as it has not become visible and then treat it as if it were an
  immediately visible and honest deviation.
  Indeed, $h_{\le r} \cdot (m'_r,\mes_r^{+d}) \cdot h_1^{+d}$ is the
  history resulting from playing  an immediately visible and honest
  deviation visiting the same vertices as $h$.
  
  We argue why this is well-defined. Pick two such deviations, for
  players $d$ and $d'$ respectively, which generate histories
  $h = h_{\le r} \cdot  (\bar{m}_r,\bar\mes_r) \cdot
  h_1$ and $h' = h'_{\le r} \cdot  (\bar{m}'_r,\bar\mes'_r) \cdot
  h'_1$ respectively.
  We assume $a \ne d,d'$. It is the case
  that $a$ cannot distinguish between $h$ and $h'$ if and only if
  $\pi_a(h) = \pi_a(h')$. By considering the players in the
  neighbourhood of $d$, it is not difficult to get that
  $\pi_a(h) = \pi_a(h')$ implies
  $\pi_a(h_{\le r} \cdot (\bar{m}_r,\mes_r^{+d}) \cdot h_1^{+d}) =
  \pi_a(h'_{\le r} \cdot (\bar{m}'_r,{\mes'_r}^{+d}) \cdot
  {h'_1}^{+d})$. Hence, this is well-defined.

  Now, let $\rho'$ be the outcome of a $d$-deviation
  w.r.t. $\widetilde\sigma_\Agt$. It can be written as
  $v_0 \cdot (\bar{m}_0,\bar\mes_0) \cdot v_1 \cdot
  (\bar{m}_1,\bar\mes_1) \ldots v_r \cdot (\bar{m}_r,\bar\mes_r) \cdot
  \rho_1$ (as before). We have that
  $\rho_{\le r} \cdot (\bar{m}_r,\mes_r^{+d}) \cdot \rho_1^{+d}$ is
  the outcome of an immediately visible and honest $d$-deviation of
  $\sigma_\Agt$ (since one can check all players but $d$ play
  according to $\sigma_\Agt$ along the play).  Hence, it cannot be
  profitable to the deviator (by hypothesis on $\sigma_\Agt$). Since
  the two sequences $\vertices(\rho')$ and
  $\vertices(\rho_{\le r} \cdot (\bar{m}_r,\mes_r^{+d}) \cdot
  \rho_1^{+d})$ coincide, we conclude that $\rho'$ is not a profitable
  deviation, and therefore that $\widetilde\sigma_\Agt$ is a Nash
  equilibrium.
\end{proof}

\subsection{A simple communication pattern is sufficient! Reduction to
  normed strategy profiles}

In this part, we will show that the richness of the communication
offered by the setting is somehow useless, in that we will show that a
very simple communication pattern will be sufficient for generating
Nash equilibria. This pattern consists in reporting the deviator (role
of the direct neighbours of the deviator), and propagating this
information (for all the other players, following the communication
graph).

We recall the notion of a normed profile.  Let $\sigma_\Agt$ be a
strategy profile. Let $\rho$ be its main outcome. The profile
$\sigma_\Agt$ is said \emph{normed} whenever the following conditions
hold:
\begin{enumerate}
\item for every
  $h \in \out(\sigma_\Agt) \cup \bigcup_{d \in \Agt,\ \sigma'_d}
  \out(\sigma_\Agt[d/\sigma'_d],v_0)$, if $\vertices(h)$ is a prefix
  of $\vertices(\rho)$, then for every $a \in \Agt$,
  $\sigma_a(h)[2] = \epsilon$;
\item for every $d \in \Agt$, for every $d$-strategy $\sigma'_d$,
  if $h \cdot (m,\mes) \cdot v \in \out(\sigma_\Agt[d/\sigma'_d],v_0)$
  is the first step out of $\vertices(\rho)$, then for every $d
  \fleche a$,
  $\sigma_a(h \cdot (m,\mes) \cdot v)[2] = \id_d$; 
\item for every $d \in \Agt$, for every $d$-strategy $\sigma'_d$, if
  $h \cdot (m,\mes) \cdot v \in \out(\sigma_\Agt[d/\sigma'_d],v_0)$
  has left the main outcome for more than one step, then for every $a
  \in \Agt$, $\sigma_a(h \cdot (m,\mes) \cdot v)[2] = \epsilon$ if for
  all $b \fleche a$, $\mes(b) = \epsilon$ and $\sigma_a(h \cdot
  (m,\mes) \cdot v)[2] = \id_d$ if there is $b \fleche a$ such that
  $\mes(b) = \id_d$; note that this is well defined since at most one
  id can be transmitted.
\end{enumerate}
The first condition says that, as long as a deviation is not visible,
then no message needs to be sent;
the second condition says that as soon as a deviation becomes visible,
then messages denouncing the deviator should be sent by ``those who
know'', that is, the (immediate) neighbours of the deviator; the third
condition says that the name (actually, the id) of the deviator should
propagate according to the communication graph.

Our first issue is that it can be the case that the deviator can
deviate in two distinct ways (two different actions or messages) but
leading to the same sequence of vertices. Those two deviations should
be treated in the same way by the other players. This might in general
not the case, hence we provide a construction to ensure it.

First, let us introduce a new equivalence relation on histories. Given
a strategy profile $\sigma_\Agt$ we say that
$h \equiv_{\sigma_\Agt} h'$ if either $h=h'$ or writing
$h=v_0\cdot(m_0,\mes_0)\cdot v_1...\cdot v_{s-1} \cdot
(m_{s-1},\mes_{s-1})\cdot v_s$ and
$h'=v_0\cdot(m'_0,\mes'_0)\cdot v'_1...\cdot v'_{s-1} \cdot
(m'_{s-1},\mes'_{s-1})\cdot v'_s$, we have:
\begin{itemize}
 \item $\vertices(h)=\vertices(h')$;
 \item there is $d\in \Agt$ such that $h$ and $h'$ are both
   $d$-deviations w.r.t. $\sigma_\Agt$, and
   $\min\{l \mid (m_l,\mes_l)\neq\sigma_\Agt(h_{\le l})\} = \min\{l
   \mid (m'_l,\mes'_l)\neq\sigma_\Agt(h'_{\le l})\}$.
\end{itemize}
Essentially both $h$ and $h'$ leave the main outcome of $\sigma_\Agt$ due to
a single-player deviation from the same player, and follow the same sequence
of vertices.

\begin{lemma}
  \label{lmaequiv}
  Let $\sigma_\Agt$ be a strategy profile which is resistant to
  immediately visible single-player honest deviations
  w.r.t. $\sigma_\Agt$. For every history $h$ we define a canonical
  representative of the equivalence class of $h$ under
  $\equiv_{\sigma_\Agt}$ which we denote by $\overline h$, with the
  constraint that whenever possible $\overline h$ will be generated by
  playing $\sigma_\Agt$ against a single-player deviation (there exists
  $d$, $\tau_d$ and $k$ such that 
  $\overline h = \out(\sigma_\Agt[d/\tau_d],v_0)_{\leq k}$).
  We define the strategy profile $\sigma'_\Agt$ inductively by
  \begin{itemize}
  \item if $h = \out(\sigma_\Agt,v_0)_{\leq k}$ for some $k$, then
    $\sigma'_\Agt(h)=\sigma_\Agt(h)$;
  \item otherwise, if $h = \out(\sigma'_\Agt[d/\tau_d],v_0)_{\leq k}$
    for some $k$, some player $d\in \Agt$ and some $d$-deviation $\tau_d$, 
    then:
    \begin{itemize}
    \item if $d \fleche a$, then $\sigma'_a(h)=\sigma_a(\overline h)$;
    \item else $\sigma'_a(h)=\sigma_a(h)$.
    \end{itemize}
  \end{itemize}
  Then, $\sigma'_\Agt$ is a strategy profile which is resistant to
  immediately visible single-player honest deviations
  w.r.t. $\sigma'_\Agt$, with the same outcome as $\sigma_\Agt$ and
  satisfies if $h\equiv_{\sigma'_\Agt} h'$ then $\sigma'_\Agt(h)=\sigma'_\Agt(h')$.
\end{lemma}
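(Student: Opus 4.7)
The plan is to verify in order the four claims of the lemma: well-definedness of $\sigma'_\Agt$, equality of main outcomes, resistance to immediately visible single-player honest deviations, and the invariance $h \equiv_{\sigma'_\Agt} h' \Rightarrow \sigma'_\Agt(h) = \sigma'_\Agt(h')$. For well-definedness, I would argue by induction on $|h|$: on prefixes of $\out(\sigma_\Agt, v_0)$ the first clause applies; off the main outcome, the deviator $d$ in the second clause is uniquely the single player whose action or message at some step differs from the prescription of $\sigma'_\Agt$ (itself defined by induction on shorter prefixes), so the second clause applies unambiguously and picks a unique $d$. Equality of main outcomes is then immediate since on main-outcome prefixes $\sigma'_\Agt$ coincides with $\sigma_\Agt$ by definition.

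For the invariance, I would proceed by simultaneous induction on $|h|$, establishing jointly that the equivalence relations $\equiv_{\sigma_\Agt}$ and $\equiv_{\sigma'_\Agt}$ agree on histories up to length $|h|$, and that $\sigma'_\Agt$ respects $\equiv_{\sigma'_\Agt}$ at length $|h|$. The invariance established at strictly shorter prefixes implies that whenever $h'' \equiv_{\sigma'_\Agt} \overline{h''}$ at shorter length, the non-deviator actions prescribed by $\sigma'_\Agt$ match those prescribed by $\sigma_\Agt$ along $\overline{h''}$, so being a $d$-deviation w.r.t. $\sigma'_\Agt$ coincides with being a $d$-deviation w.r.t. $\sigma_\Agt$ at the current length; in particular $\overline{h} = \overline{h'}$ whenever $h \equiv_{\sigma'_\Agt} h'$. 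To conclude, fix such $h, h'$ with deviator $d$: for $a$ with $d \fleche a$, $\sigma'_a(h) = \sigma_a(\overline{h}) = \sigma_a(\overline{h'}) = \sigma'_a(h')$; for $a$ with $d \not\fleche a$, $\sigma'_a(h) = \sigma_a(h)$, and a step-level induction (using the invariance at shorter prefixes) yields that non-$d$ actions and messages coincide along $h$ and $h'$, hence $h \sim_a h'$; the $G$-compatibility of $\sigma_a$ then yields $\sigma_a(h) = \sigma_a(h')$.

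For resistance, given an immediately visible honest single-player $d$-deviation $\tau_d$ w.r.t. $\sigma'_\Agt$ with outcome $\rho'$, I would inductively build a $d$-deviation $\tau'_d$ w.r.t. $\sigma_\Agt$ whose outcome $\rho''$ satisfies $\vertices(\rho'') = \vertices(\rho')$. At each step one picks $\tau'_d$ so that $\rho''_{\le k+1}$ realizes the canonical representative of $\rho'_{\le k+1}$ (which by construction is generated against $\sigma_\Agt$); the key point is that every $\rho'_{\le k+1}$ lies in a non-trivial $\equiv_{\sigma_\Agt}$-class containing such a canonical witness, and the choices can be made consistently along the play. Since payoffs depend only on vertices and $\sigma_\Agt$ is resistant to immediately visible honest deviations by hypothesis, $\tau_d$ cannot be profitable w.r.t. $\sigma'_\Agt$ either.

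The main obstacle is the coordinated induction proving simultaneously well-definedness, the coincidence of $\equiv_{\sigma_\Agt}$ and $\equiv_{\sigma'_\Agt}$, and the invariance. The subtlety is that $\overline{\cdot}$ is defined relative to $\equiv_{\sigma_\Agt}$ while the statement refers to $\equiv_{\sigma'_\Agt}$; one must show these agree along the constructed histories, which is circular unless the three properties are bootstrapped together in a single induction. The $d \not\fleche a$ branch of the invariance further relies on the $G$-compatibility of $\sigma_\Agt$, which is inherited from the context of the overall proof of Theorem~\ref{coro} where starting Nash equilibria are $G$-compatible by definition.
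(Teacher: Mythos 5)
Your proposal is correct in outline and follows essentially the same route as the paper: the same decomposition into well-definedness, preservation of the main outcome, the $\equiv$-invariance, and resistance; the same bootstrapped induction to reconcile $\equiv_{\sigma_\Agt}$ with $\equiv_{\sigma'_\Agt}$ (which the paper dismisses as ``fairly easy''); and the same transfer of every $\sigma'_\Agt$-deviation to a vertex-equivalent $\sigma_\Agt$-deviation for resistance. One point of divergence is worth fixing: what the paper's proof calls ``well-definedness'' is not the unambiguity of the case analysis (which is what you verify) but the $\sim_a$-invariance of $\sigma'_a$ --- since the prescription for player $a$ is written in terms of the deviator's identity $d$, one must check that two histories $h \sim_a h'$ arising from deviations by possibly \emph{different} players $d$ and $d'$ receive the same prescription, as otherwise $\sigma'_a$ is not implementable by player $a$. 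The needed argument is exactly the case split you already deploy inside the invariance step (if $d \fleche a$ or $d' \fleche a$ then $a$'s observations force $d=d'$ and hence $\overline h = \overline{h'}$; otherwise both clauses reduce to $\sigma_a$, whose $G$-compatibility applies), so you should state it as a standalone claim over all $\sim_a$-equivalent pairs rather than only within $\equiv_{\sigma'_\Agt}$-classes, where the deviator coincides by definition.
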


\begin{proof}
  First we argue why $\sigma'_\Agt$ is well-defined. Consider two
  histories $h$ and $h'$ generated by playing $\sigma'_\Agt$ against
  single-player deviations with deviators $d$ and $d'$ respectively,
  and a player $a$ such that $h \sim_a h'$. If $d \fleche a$ or $d'
  \fleche a$, then, since $h\sim_a h'$, we deduce that $d=d'$. Thus,
  by hypothesis, $\overline{h} = \overline{h'}$, which implies
  $\sigma'_a(h)=\sigma'_a(h') = \sigma_a(\overline{h})$. If $d
  \notfleche a$ and $d' \notfleche a$, then
  $\sigma'_a(h)=\sigma_a(h)=\sigma_a(h')=\sigma'_a(h')$.
  
  It is fairly easy to show by induction that if $h\equiv_{\sigma'_\Agt} h'$
  then $\sigma'_\Agt(h)=\sigma'_\Agt(h')$.
  
  Finally, we explain why $\sigma'_\Agt$ is resistant to immediately
  visible single-player deviations. We show by induction that for
  every $k\in\mathbb{N}$, for every deviator $d$ and every
  $d$-deviation $\tau_d$ w.r.t. $\sigma'_\Agt$ there exists a
  $d$-deviation $\tau'_d$ w.r.t. $\sigma_\Agt$ such that
  $\vertices(\out(\sigma'_\Agt[d/\tau_d],v_0)_{\leq k})=
  \vertices(\out(\sigma_\Agt[d/\tau'_d],v_0)_{\leq k})$.
  

  Hence, since $\sigma_\Agt$ is resistant to immediately visible
  single-player deviations, then so is $\sigma'_\Agt$.
\end{proof}

Let us now use this to prove the following result.

\begin{restatable}{lemma}{lemmadeux}
  \label{lemma2}
  Assume $\sigma_\Agt$ is a strategy profile which is resistant to
  immediately visible single-player honest deviations
  w.r.t. $\sigma_\Agt$ and such that, if $\vertices(h)$ is a prefix of
  $\vertices(\out(\sigma_\Agt,v_0))$, then
  $\sigma_\Agt(h)[2] = \epsilon$.  Then one can construct a normed
  strategy profile $\widehat\sigma$ which is resistant to immediately
  visible honest single-player deviations, and such that
  $\vertices(\out(\sigma_\Agt,v_0)) =
  \vertices(\out(\widehat\sigma_\Agt,v_0))$.
\end{restatable}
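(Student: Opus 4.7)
The plan is to keep the action layer of $\sigma_\Agt$ essentially unchanged and to rewrite only the messages, so as to follow exactly the epidemic pattern required by normedness, while preserving the vertex sequence of every outcome (main outcome and each single-player deviation outcome). Since payoffs depend only on vertex sequences, resistance will then transfer from $\sigma_\Agt$ to $\widehat\sigma_\Agt$. As a preliminary step I apply Lemma~\ref{lmaequiv} to assume without loss of generality that $\sigma_\Agt$ is constant on every $\equiv_{\sigma_\Agt}$-class, which, together with the hypothesis of Lemma~\ref{lemma2}, means that the action and message prescribed by $\sigma_\Agt$ along any outcome depend only on the vertex sequence, the identity of the single deviator (if any), and the first divergence step, and not on the particular message trace in between.

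For each history $h = v_0\cdot(m_0,\mes_0)\cdots v_s$ I define a canonical \emph{shadow} $\tilde h$ with the same vertices and same moves as $h$, but whose messages are those $\sigma_\Agt$ would produce against an honest immediately visible $d$-deviation, with $d$ the deviator identified by $h$'s first off-main vertex. Concretely $\tilde h$ is built by induction: before the first off-main step it coincides with the corresponding prefix of $\rho$ (hence messages are empty, by hypothesis); from that step onwards the deviator $d$ always sends $\id_d$, while every other player $b$ sends $\sigma_b(\tilde h_{<r})[2]$. I then set
\[
  \widehat\sigma_a(h) \;=\; \bigl(\sigma_a(\tilde h)[1],\; \mu_a(h)\bigr),
\]
where $\mu_a(h)$ is read directly off the last message vector of $h$ via the three clauses of normedness: empty when $\vertices(h)$ still prefixes $\vertices(\rho)$; $\id_d$ for every $a$ with $d\fleche a$ when $h$ is exactly one step off main; and further down, $\id_d$ iff some $b\fleche a$ already carries $\id_d$ in the previous message vector, and $\epsilon$ otherwise.

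It is immediate from the construction that $\widehat\sigma_\Agt$ is normed and that its main outcome visits the same vertex sequence as $\rho$. Moreover, for every immediately visible honest single-player $d$-deviation $\widehat\tau_d$ against $\widehat\sigma_\Agt$, the map $h\mapsto\tilde h$ sends $\out(\widehat\sigma_\Agt[d/\widehat\tau_d],v_0)$ to the outcome of an immediately visible honest $d$-deviation against $\sigma_\Agt$ with exactly the same vertex sequence; hence $d$'s payoff is unchanged, and the resistance hypothesis on $\sigma_\Agt$ transfers to $\widehat\sigma_\Agt$. The main obstacle is the $G$-compatibility of $\widehat\sigma_a$: while $\mu_a(h)$ depends only on the projection $\pi_a(h)$, checking that $\sigma_a(\tilde h)[1]$ does as well requires showing $h\sim_a h' \Rightarrow \tilde h\sim_a\tilde{h}'$, handled case-by-case depending on whether $a$ has received the deviator id (so $d$ is uniquely known and the $\equiv_{\sigma_\Agt}$-class of $\tilde h$ is determined, whence Lemma~\ref{lmaequiv} applies), or not (so her set of suspects, and the shadow messages she observes on her neighbours, are determined by her own projection via the $G$-compatibility of the $\sigma_b$'s used to build $\tilde h$ inductively).
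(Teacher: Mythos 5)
Your construction is essentially the paper's, run in the opposite direction: where you pull each candidate history $h$ back to a $\sigma_\Agt$-history $\tilde h$ and set $\widehat\sigma_a(h)=(\sigma_a(\tilde h)[1],\mu_a(h))$, the paper pushes each $\sigma_\Agt$-deviation history $h$ forward to its epidemic-message version $\eta(h)$ and sets $\widehat\sigma_a(\eta(h))=(\sigma_a(h)[1],\cdot)$; these are inverse descriptions of the same profile. The preliminary use of Lemma~\ref{lmaequiv}, the transfer of resistance via preservation of vertex sequences, and the case split for well-definedness (id received vs.\ not received) all match the paper. You have also correctly located the only real difficulty in the well-definedness of the action component.

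The gap is in how you propose to close that difficulty in the second case. You want to show $h\sim_a h'\Rightarrow \tilde h\sim_a\tilde h'$ by appealing, inductively, to the $G$-compatibility of the $\sigma_b$ for $b\fleche a$: but $G$-compatibility of $\sigma_b$ only yields $\sigma_b(\tilde h_{<r})=\sigma_b(\tilde h'_{<r})$ if $\tilde h_{<r}\sim_b\tilde h'_{<r}$, and $h\sim_a h'$ gives you no control over $\pi_b$ ($b$'s neighbours need not be $a$'s neighbours), so the induction on the fixed player $a$ does not close. The paper instead proves a separate quantitative lemma (its inner Lemma~\ref{lemma2}): for \emph{all} players $a$ with $\dist_G(d,a)>n$ and $\dist_G(d',a)>n$ simultaneously, any two length-$(r+n)$ immediately visible $d$- and $d'$-deviations of $\sigma_\Agt$ with the same vertex sequence satisfy $h\sim_a h'$. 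The hypothesis there is the player-independent condition ``same vertex sequence'' together with a distance bound, and the inductive step passes from $a$ to its neighbours precisely because $\dist_G(d,b)>n-1$ for every $b\fleche a$. This strengthened statement is also what guarantees that your case split is exhaustive in the needed sense: a player is in case (b) exactly when her distance to every candidate deviator exceeds the number of steps since the deviation became visible, i.e.\ the epidemic message arrives no later than the first moment she could possibly distinguish. Your final clause gestures at this mechanism but the induction hypothesis must be reformulated as above for the argument to go through.
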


\begin{proof}
  Let $\sigma_\Agt$ be a strategy profile which is resistant to
  immediately visible single-player honest deviations, with no
  communication along the main outcome. Furthermore, by usage of
  Lemma~\ref{lmaequiv}, we can suppose that for all histories $h$ and
  $h'$ such that $h\equiv_{\sigma_\Agt} h'$, we have that
  $\sigma_\Agt(h)=\sigma_\Agt(h')$. We will build a profile
  $\widehat\sigma_\Agt$ which will be normed and will coincide in some
  sense to $\sigma_\Agt$.  Let $\rho$ be the main outcome of
  $\sigma_\Agt$.  We analyse immediately visible deviations from
  $\sigma_\Agt$ at some given step of $\rho$, say after prefix
  $h_0 = \rho_{\le r}$, and show that the undistinguishability
  relation of deviations from that point is then very simple.

  \begin{lemma}
    \label{lemma2}
    Let $a$ be a player, and assume $d$ and $d'$ are two players such
    that $\dist_G(d,a) > n$ and $\dist_G(d',a)>n$. Then, for each
    length-$(r+n)$ immediately visible $d$-deviation $h$
    w.r.t. $\sigma_\Agt$, after $h_0$, for each length-$(r+n)$
    immediately visible $d'$-deviation $h'$ w.r.t. $\sigma_\Agt$,
    after $h_0$, if the projection over vertices of $h$ and $h'$
    coincide, then $h \sim_a h'$.
  \end{lemma}

  \begin{proof}
    We do the proof by induction on $n$.

    \textit{Case $n=1$:} We look at one-step deviations after
    $h_0$, say $h = h_0 \cdot (m,\mes) \cdot v$ and
    $h' = h_0 \cdot (m',\mes') \cdot v$ (the final vertex $v$ is
    assumed to be the same, otherwise those two deviations will for
    sure be distinguished by every player). Then, the following holds:
    \begin{itemize}
    \item $m(-d) = \sigma_\Agt(h_0)(-d)[1]$ and
      $\mes(-d) = \mes_\epsilon(-d)$;
    \item $m'(-d') = \sigma_\Agt(h_0)(-d')[1]$ and
      $\mes'(-d') = \mes_\epsilon(-d')$.
    \end{itemize}
    The fact that $\dist_G(d,a) > n$ and $\dist_G(d',a)>n$ means that
    $d \notfleche a$ and $d' \notfleche a$.
    Hence, for every $b \fleche a$, $m(b) = \sigma_b(h_0)[1] = m'(b)$
    and $\mes(b) = \epsilon = \mes'(b)$. This implies that $\pi_a(h_0
    \cdot (m,\mes) \cdot v) = \pi_a(h_0 \cdot (m',\mes') \cdot v)$,
    which means that $h \sim_a h'$.  This concludes the case $n=1$.

    \textit{Inductive step:} assume that $\dist_G(d,a) > n+1$ and
    $\dist_G(d',a)>n+1$, and that $h \cdot (m,\mes) \cdot v$ and
    $h' \cdot (m',\mes') \cdot v$ are respectively length-$(r+n+1)$
    immediately visible $d$- (resp. $d'$-)deviations after $h_0$
    w.r.t. $\sigma_\Agt$, which project on the same sequences of
    vertices. For every $b \fleche a$, $\dist_G(d,b)>n$, hence, by
    induction hypothesis, $h \sim_b h'$. It implies that
    $\sigma_b(h) = \sigma_b(h')$, hence $m(b) = m'(b)$ and
    $\mes(b) = \mes'(b)$. We deduce that
    $h \cdot (m,\mes) \cdot v \sim_a h' \cdot (m',\mes') \cdot
    v$. This concludes the proof of the inductive step.
  \end{proof}
  We will now see that this simple undistinguishability relation can
  be defined using the message propagation mechanism of normed
  profiles: a player doesn't send any message, unless she receives the
  id of the deviator from (at least) one neighbour.

  We first define normalized versions of immediately visible
  single-player deviations, and for this we set
  $\eta(\rho_{\le r}) = \rho_{\le r}$ for every integer $r$. Let now
  $h$ be an immediately visible and honest $d$-deviation
  w.r.t. $\sigma_\Agt$, which becomes visible right after
  $\rho_{\le r}$. We define $\eta(h)$ inductively as follows:
  \begin{itemize}
  \item If $h = \rho_{\le r} \cdot (m,\mes) \cdot v$, we set
    $\eta(h) = (\rho_{\le r} \cdot (m,\mes_d) \cdot v)$ where
    $\mes_d(d) = \id_d$ and $\mes_d(-d) = \mes_\epsilon(-d)$;
  \item if $h$ already extends $\rho_{\le r}$, then $\eta(h \cdot
    (m,\mes) \cdot v) = (\eta(h) \cdot (m,\mes') \cdot v)$ where, for
    every $a \in \Agt$, $\mes'(a) = \epsilon$ if for every $b \fleche
    a$, $\bar\mes(b) = \epsilon$ and $\mes'(a) = \id_d$ if there is
    some $b \fleche a$ such that $\bar\mes(b) = \id_d$, where
    $\bar\mes$ is the last message along $\eta(h)$.
  \end{itemize}
  Somehow, $\eta$ propagates properly messages after a deviation has
  happened.

  Now define profile $\widehat\sigma_\Agt$ as follows: for every $r$,
  set $\widehat\sigma_\Agt(\rho_{\le r}) = \sigma_\Agt(\rho_{\le
    r})$. In particular, the main outcome of $\widehat\sigma_\Agt$ is
  $\rho$.  Let now $h$ be an immediately visible and honest
  $d$-deviation w.r.t. $\sigma_\Agt$, after $\rho_{\le r}$; we set for
  every $a \in \Agt$:
  \begin{itemize}
  \item if $|h| \ge \dist_G(d,a)-r$, we set
    $\widehat\sigma_a(\eta(h)) = (\sigma_a(h)[1],\id_d)$;
  \item if $|h| < \dist_G(d,a)-r$, we set
    $\widehat\sigma_a(\eta(h)) = (\sigma_a(h)[1],\epsilon)$.
  \end{itemize}
  We argue why $\widehat\sigma_\Agt$ is defined everywhere it should
  be defined. For that we show that every immediately visible
  single-player deviation of $\widehat\sigma_\Agt$ is the image by
  $\eta$ of some immediately visible single-player deviation of
  $\sigma_\Agt$. This can easily be done by induction on the length of
  the deviation.

  We then notice that, by construction, $\widehat\sigma_\Agt$
  propagates messages properly.  We finally argue below why
  $\widehat\sigma_\Agt$ is well-defined.

  Assume that $h$ and $h'$ are two histories generated by immediately visible $d$-
  (resp. $d'$-)deviations w.r.t. $\widehat\sigma_\Agt$ such that
  $\eta(h) \sim_a \eta(h')$ but such that $h \not\sim_a h'$. From
  Lemma~\ref{lemma2}, we deduce that $\dist_G(d,a)<r+n$ or
  $\dist_G(d',a)<r+n$ where $h$ and $h'$ become visible at step $r+1$
  and are of length $r+n$ (we know they are of the same length, become
  visible on the same step and that $\vertices(h)=\vertices(h')$ from
  $\eta(h)\sim_a \eta(h')$).  This means in turns thanks to the
  construction of $\widehat\sigma_\Agt$ that player $a$ received
  message $\id_d$ at some step along $\eta(h)$ or signal $\id_{d'}$ at
  some step along $\eta(h')$. Hence we have $d=d'$ and player $a$
  received message $\id_d$ at the same step along both $\eta(h)$ and
  $\eta(h')$ since otherwise, she would be able to tell the difference
  between them. Thus, $h$ and $h'$ are generated by playing two $d$-deviations
  visiting the same vertices and deviating from the main outcome at the same time. 
  Hence, we have that $h \equiv_{\sigma_\Agt} h'$,
  which means from our hypothesis on $\sigma_\Agt$ that
  $\sigma_\Agt(h)=\sigma_\Agt(h')$, thus $\widehat\sigma_\Agt$ is
  well-defined.
  
  Now we prove that $\widehat\sigma_\Agt$ is resistant to immediately
  visible single-player deviations.  Toward a contradiction, assume
  there is a profitable immediately visible $d$-deviation. As noticed
  above, this deviation is the image by $\eta$ of some immediately
  visible $d$-deviation w.r.t. $\sigma_\Agt$. Since $\eta$ preserves
  the sequences of vertices, this deviation is profitable as
  well. This is not possible, by assumption on $\sigma_\Agt$, hence
  $\widehat\sigma_\Agt$ is resistant to immediately visible
  single-player deviations.
\end{proof}

\subsection{Proof of Theorem~\ref{coro}}

\coro*

\begin{proof}
  Assume $\sigma_\Agt$ is a Nash equilibrium with payoff $p$.  It is
  in particular resistant to immediately visible single-player honest
  single-player deviations. Applying Lemma~\ref{lemma1}, $\sigma_\Agt$
  can be turned to a profile $\widetilde\sigma_\Agt$ satisfying the
  hypotheses of Lemma~\ref{lemma2}, and having payoff $p$. Applying
  Lemma~\ref{lemma2}, there is another profile $\widehat\sigma_\Agt$,
  which is normed and resistant to immediately visible single-player
  deviations, and such that its main outcome visits the same sequence
  of vertices as the main outcome of $\sigma_\Agt$. In particular, it
  has payoff $p$.

  Assume that $\sigma_\Agt$ is a normed strategy profile, which is
  resistant to immediately visible and honest single-player
  deviations, and which has payoff $p$.  Thanks to Lemma~\ref{lemma1},
  one can construct a Nash equilibrium $\widetilde\sigma_\Agt$, whose
  main outcome visits the same sequence of vertices as the main
  outcome of $\sigma_\Agt$, hence its payoff is $p$ as well. 
\end{proof}

\section{Correctness of the epistemic game construction}
\label{sec:correctness}

In this section, we let $\calG$ be a concurrent game, $G$ be a
communication graph for the players of $\calG$, and we let
$\calE_\calG^G$ be the corresponding epistemic game. We assume all
previous notations. 

We write $\ID = \{\id_d \mid d\in \Agt\}$ and
$\ID_\epsilon = \ID \cup \{\epsilon\}$.

\subsection{Basic properties of the epistemic game}

Considering an $\calE_{\calG}^G$-history $H =
(v_0,X_0),(v_0,X_0,f_0)...(v_k,X_k)$ (notice we always have perfect
alternation between \Eve's states and \Adam's state), we write
$\visited(H)$ for $v_0...v_k \in V^*$. We denote by
$\Hist_{\calE_\calG^G}(v_0,X_0)$ the set of such histories. An
important result that will be useful later is the following:

\lma*

\begin{proof}
 A proof by induction follows immediately from the structure of the
 epistemic game.
\end{proof}

\subsection{From \Eve's strategies in $\calE_\calG^G$ to normed profiles in
  $\calG$}

We fix a strategy $\zeta$ for \Eve, associating to every possible
$\calE_{\calG}^G$-history a certain \Eve's action.  From $\zeta$, one
defines inductively (on the size of histories) a partial function
$E_\zeta: \bigcup_{a\in \Agt} \Hist_{\calG,a}(v_0) \to
\Hist_{\calE_\calG^G}(v_0,\emptyset)$ associating a
$\calE_{\calG}^G$-history to some $a$-history $h$ in the original game
such that if $(m,\mes) \cdot v$ is a suffix of $h$, then 

\begin{itemize}
\item (i) for every $b \in \Vois(a)$, $\mes(b) \in \ID_\epsilon$;
\item (ii) for every $b,c \in \Vois(a)$, $\mes(b) \ne \mes(c)$ implies
  $\mes(b) = \epsilon$ or $\mes(c) = \epsilon$; for every
  $b \in \Vois(a)$, if the message sent by $b$ earlier along $h$ is
  $\id_d$ for some $d \in \Agt$, then $\mes(b) = \id_d$;
 \item (iii) $\last(E_\zeta(h)) = (v,X)$, with the
following additional properties: \label{ddag}
 \begin{itemize}
 \item $X = \emptyset$ implies for each $b \in \Vois(a)$, $\mes(b) =
   \epsilon$;
 \item if there is $b \in \Vois(a)$ with $\mes(b) = \id_d$, then $d
   \in \dev(X)$ and $a \in I^X_d$;
  \item if $X \ne \emptyset$ and for every $b \in \Vois(a)$,
   $\mes(b) = \epsilon$, then there exists $c \in \dev(X)$ such that
   $a \notin I^X_c$;
 \end{itemize}
\end{itemize}
  
We write $(\ddag)$ those conditions.

Together with this partial function $E_\zeta$, we define a strategy
profile $\sigma_\Agt$ in $\calG$, which will in some sense (that we
will explicit later) correspond to $\zeta$.

For history $v_0 \in \Hist_{\calG,a}(v_0)$, we abusively assume that
the last message (denoted $\mes$) assigns $\epsilon$ to every player in
$\Vois(a)$ (that is, $\mes(\Vois(a)) = \mes_\epsilon(\Vois(a))$). All
conditions $(\ddag)$ are then immediately satisfied and we define
$E_\zeta(v_0) = (v_0,\emptyset)$.

Pick now a history $h \in \Hist_{\calG,a}(v_0)$, ending with
$(m,\mes) \cdot v$, such that $E_\zeta(h)$ is well-defined (induction
hypothesis). Notice this implies that $h$ satisfies all conditions
$(\ddag)$.  Writing $\last(E_\zeta(h)) = (v,X)$, we will define
$\sigma_a(h)$, and extend $E_\zeta$ to several extensions of $h$. We
distinguish several cases:
\begin{itemize}
\item Assume first that $X=\emptyset$. Then,
  $\zeta(E_\zeta(h)) \in \Act^\Agt$. We set
  $\sigma_a(h) = \big(\zeta(E_\zeta(h))(a), \epsilon \big)$.

  We extend $E_\zeta$ as follows.
  \begin{itemize}
  \item First, if all players follow the suggestion of \Eve (that is,
    do not deviate), then the next state will be
    $v' = \Tab\big(v, \zeta(E_\zeta(h))\big)$. In this case, no
    message should be broadcast, and in $\calE_\calG^G$ (under
    $E_\zeta$), \Adam complies with the suggestion of \Eve and goes to
    $(v',\emptyset)$:
    \[
      E_\zeta\big(h \cdot \pi_a\big(\zeta(E_\zeta(h)), \mes_\epsilon
      \big) \cdot v'\big) = E_\zeta(h) \cdot ((v,\emptyset),
      \zeta(E_\zeta(h))) \cdot (v',\emptyset)
    \]
  \item Else, for every $d \in \Agt$ and $\delta \in \Act$, writing
    $v' = \Tab\big(v,\zeta(E_\zeta(h))[d/\delta]\big)$ and assuming
    that $v' = \Tab\big(v, \zeta(E_\zeta(h))\big)$ (which means it is
    an invisible deviation), no message should be broadcast as well
    since the deviation is somehow harmless, and in $\calE_\calG^G$
    (under $E_\zeta$), \Adam complies with the suggestion of \Eve and
    goes to $(v',\emptyset)$:
    \[
      E_\zeta\big(h \cdot \pi_a\big(\zeta(E_\zeta(h))[d/\delta],
      \mes_d \big) \cdot v'\big) = E_\zeta(h) \cdot
      ((v,\emptyset), \zeta(E_\zeta(h))) \cdot (v',\emptyset)
    \]
    where $\mes_d(d) = \id_d$ and $\mes_d(-d) = \mes_\epsilon(-d)$.
  \item Finally, for every $d \in \Agt$ and $\delta \in \Act$, writing
    $v' = \Tab\big(v,\zeta(E_\zeta(h))[d/\delta]\big)$ and assuming
    that $v' \ne \Tab\big(v, \zeta(E_\zeta(h))\big)$ (which means it
    is a visible deviation), and $X' =
    \update((v,\emptyset),\zeta(E_\zeta(h)),v')$ (which is then
    defined and nonempty):
    \[
      E_\zeta\big(h \cdot \pi_a\big(
      \zeta(E_\zeta(h))[d/\delta],\mes_d \big) \cdot v'\big) =
      E_\zeta(h) \cdot ((v,\emptyset), \zeta(E_\zeta(h))) \cdot
      (v',X')
    \]
    where $\mes_d(d) = \id_d$ and $\mes_d(-d) = \mes_\epsilon(-d)$.
  \end{itemize}
  Note that in all cases, the expected conditions are satisfied.
\item Assume that $X \ne \emptyset$.  If there is $b \in \Vois(a)$
  such that $\mes(b) \ne \epsilon$, then pick $d \in \dev(X)$ such
  that $\mes(b) = \id_d$; otherwise pick any $d \in \dev(X)$ such that
  $a \notin I^X_d$ (it must exist as well by induction
  hypothesis). We let $m = \zeta(E_\zeta(h))(d)$, and we set:
  \[
  \sigma_a(h) = \left\{\begin{array}{ll} (m(a),\id_d) &
      \text{if there is}\ b \in \Vois(a)\ \text{s.t.}\ \mes(b) = \id_d, \\
      (m(a),\epsilon) & \text{otherwise.}
    \end{array}\right.
  \]
  In the first case, player $a$ has received the message blaming
  player $d$, the deviator, while in the second case player $a$ 
  has received no message.
  
  This is well-defined for two different reasons: (i) those two cases
  can be distinguished in $h$ thanks to the messages
  $\pi_a(\mes) = \mes(\Vois(a))$; (ii) in the second case, thanks to
  Lemma~\ref{lma} and to the definition of actions in the epistemic
  game, the value $\zeta(E_\zeta(h))(d)(a)$ is independent of the
  choice of $d \in \dev(X)$ satisfying $a \notin I^X_d$.

  We furthermore extend $E_\zeta$ as follows.
  \begin{itemize}
  \item For every $\delta \in \Act$, writing
    $v' = \Tab(v,m[d/\delta])$, and
    $X' = \update((v,X),\zeta(E_\zeta(h)),v')$ (which is then
    defined and nonempty):
    \[    
      E_\zeta\big(h \cdot \pi_a(m[d/\delta],\mes') \cdot v'\big) =
      E_\zeta(h) \cdot ((v,X), \zeta(E_\zeta(h))) \cdot (v',X')
    \]
    with $\mes'(b) = \id_d$ if $b \in I^X_d$ and
    $\mes'(b) = \epsilon$ otherwise.  
  \end{itemize}
  Note that all conditions of the induction hypothesis are satisfied.
\end{itemize} 
Note that $(\ddag)$ are satisfied by every $h$ such that $E_\zeta(h)$
is defined.

This allows to define a function $\Omega$ associating a strategy
profile in the original game to every \Eve's strategy in the epistemic
game.

\medskip Pick $\zeta$ an \Eve's strategy in the epistemic game. Let
$\sigma = \Omega(\zeta)$ be the strategy profile in the original game
we have just constructed, together with the partial function
$E_\zeta$.  We show the following lemma:
\begin{lemma}
  \label{lemma5}
  \begin{itemize}
  \item Profile $\sigma$ is normed.
  \item Let $\rho$ be the main outcome of profile $\sigma$. Then,
    taking $E_\zeta$ at the limit, for every player $a \in \Agt$,
    $E_\zeta(\pi_a(\rho))$ is well-defined, and equal to $R$, the
    unique outcome of $\zeta$ where \Adam complies to \Eve's
    suggestions. In particular, $E_\zeta(\pi_a(\rho))$ is independent
    of the choice of player $a \in \Agt$.
  \item Consider a honest and immediately visible single-player
    deviation, and let $\rho'$ be the corresponding outcome. Then,
    taking $E_\zeta$ at the limit, for every $a \in \Agt$,
    $E_\zeta(\pi_a(\rho'))$ is well-defined, and equal to $R'$, an
    outcome of $\zeta$, where \Adam does not comply to \Eve's
    suggestions. In particular, $R' \ne R$.
  \end{itemize}
\end{lemma}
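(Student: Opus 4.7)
The plan is to prove all three items by simultaneous induction on the length of histories, exploiting the inductive construction of $E_\zeta$ and $\sigma=\Omega(\zeta)$. The base case is trivial: $E_\zeta(v_0)=(v_0,\emptyset)$ which is the initial state of the compliant outcome $R$. The inductive step splits by whether the $\calG$-history being projected is a prefix of $\rho$, an invisible deviation off $\rho$, the first step that makes a deviation visible, or a later step of an already-visible $d$-deviation; these four cases mirror the four branches of the construction of $E_\zeta$.

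For item~(1), we verify the three clauses defining a normed profile. For the first clause, along any prefix $h$ of $\rho$ (or any invisible single-player deviation) the induction gives $\last(E_\zeta(\pi_a(h)))=(v,\emptyset)$, so the $X=\emptyset$ branch of the construction sets $\sigma_a(h)[2]=\epsilon$. For the second clause, if $h\cdot(m,\mes)\cdot v$ is the first step out of $\vertices(\rho)$, the extension of $E_\zeta$ uses $\update((v,\emptyset),\zeta(E_\zeta(h)),v')$, which by definition sets $I^{X'}_d=\{a\mid d\fleche a\}$; the message vector inserted into the $\calG$-history is $\mes_d$ with $\mes_d(d)=\id_d$, so each neighbour $a$ of $d$ sees $\id_d$ in her projection, and the $X\ne\emptyset$ branch then forces $\sigma_a$ to emit $\id_d$. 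The third clause follows identically from the $X\ne\emptyset$ branch: player $a$ emits $\id_d$ exactly when some $b\fleche a$ has just emitted $\id_d$, i.e. exactly when $a\in I^{X'}_d$ according to the update rule.

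For items~(2) and~(3), a joint induction shows that for every prefix of $\rho$ (resp.\ of an honest immediately visible $d$-deviation $\rho'$) and every player $a$, $E_\zeta(\pi_a(\rho_{\le k}))$ (resp.\ $E_\zeta(\pi_a(\rho'_{\le k}))$) is well-defined and coincides with a prefix of a single outcome $R$ (resp.\ $R'$) of $\zeta$. Along $\rho$, since $X_k=\emptyset$ throughout and $\sigma$ has everyone follow \Eve's suggestion, \Adam always complies, so $E_\zeta$ remains in the compliant branch and equals $R$ in the limit. Along $\rho'$, the same holds up to the first visible step; at that step, $v'\ne\Tab(v,\zeta(E_\zeta(h)))$, so $E_\zeta$ moves to the non-compliant branch, witnessing $R'\ne R$; beyond this, the $X\ne\emptyset$ extension rule continues the induction unambiguously, using only data that is visible to every player.

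The main obstacle is the independence of $E_\zeta(\pi_a(\cdot))$ from the chosen player $a$, since different players have strictly different projections. Along $\rho$ this is immediate because every $\calE$-state along $R$ has $X=\emptyset$ and \Eve's action is a single move in $\Act^\Agt$ that all players follow identically. Along $\rho'$, however, a player $a$ who has not yet received any $\id_d$ has to pick \emph{some} suspect $d\in\dev(X)$ with $a\notin I^X_d$ to define $\sigma_a$ and to extend $E_\zeta$; the choice is not unique, but by Lemma~\ref{lma} all such $d$ yield the same $K^X_d(a)$, and by the constraint on \Eve's actions in $\calE_\calG^G$ the value $\zeta(E_\zeta(h))(d)(a)$ is the same for any such $d$. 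This guarantees that $\sigma_a$ is well-defined and, more importantly, that the $\calE$-history produced by $E_\zeta$ depends only on the sequence of visited vertices and the true deviator $d$ of $\rho'$, not on the particular player $a$ through whose eyes we are tracing it.
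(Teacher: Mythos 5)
Your proposal is correct and follows essentially the same route as the paper's proof: an induction on the length of histories whose cases mirror the branches in the construction of $E_\zeta$, showing simultaneously that $E_\zeta(\pi_a(\cdot))$ is well-defined, player-independent, and tracks the compliant (resp.\ non-compliant) outcome of $\zeta$, with the message-propagation clauses of normedness read off from the $\update$ rules and well-definedness secured by Lemma~\ref{lma} together with the compatibility constraint on \Eve's actions. The only cosmetic difference is that you make the four-way case split and the player-independence obstacle explicit up front, whereas the paper folds the normedness claim into the phrase ``the message propagation system has been working properly along $h'_r$'' inside the same induction.
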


\begin{proof}
  Let
  $R = (v_0,\emptyset) \cdot ((v_0,\emptyset),f_0) \cdot
  (v_1,\emptyset) \dots (v_s,\emptyset) \dots$ be the outcome of
  $\zeta$, where \Adam complies to \Eve's suggestion. For every
  $r \ge 0$, $f_r \in \Act^\Agt$. We let
  $\rho = v_0 \cdot (f_0,\mes_\epsilon) \cdot v_1 \dots
  (f_{s-1},\mes_\epsilon) \cdot v_s \dots$. We can show (but omit it,
  since it is obvious) by induction on $s$ that (i) for every $a$,
  $E_\zeta(\pi_a(v_0 \cdot (f_0,\mes_\epsilon) \cdot v_1 \dots
  (f_{s-1},\mes_\epsilon) \cdot v_s))$ is defined, and is equal to
  $(v_0,\emptyset) \cdot ((v_0,\emptyset),f_0) \cdot (v_1,\emptyset)
  \dots (v_s,\emptyset)$, the prefix of length $s$ of $R$; and (ii)
  $\sigma_a(v_0 \cdot (f_0,\mes_\epsilon) \cdot v_1 \dots
  (f_{s-1},\mes_\epsilon) \cdot v_s) = ((f_s)_a,\epsilon)$. In
  particular, the unique outcome of profile $\sigma$ is $\rho$, and
  messages are handled correctly by $\sigma$ on that part.

  Then, notice that single-player deviations that do follow the
  sequence of vertices of the main outcome of $\sigma$ are considered
  as non-deviations by $\sigma$ (by construction, second item of the
  case $X = \emptyset$).

  Pick a `honest and visible' deviation $\sigma'_d$ for player $d$. We
  show that messages propagate properly, and that the outcome of
  $\sigma[d/\sigma'_d]$ corresponds to an outcome of $\zeta$, distinct
  from the main outcome, where \Adam complies to \Eve's
  suggestions. We write
  $\rho' = v'_0 \cdot (f'_0,\mes'_0) \cdot v'_1 \dots
  (f'_{s-1},\mes'_{s-1}) \cdot v'_s \dots$ for the outcome of
  $\sigma[d/\sigma'_d]$. There exists $s$ such that
  $v_0 \cdot (f_0,\mes_\epsilon) \cdot v_1 \dots
  (f_{s-1},\mes_\epsilon) \cdot v_s = v'_0 \cdot (f'_0,\mes'_0) \cdot
  v'_1 \dots (f'_{s-1},\mes'_{s-1}) \cdot v'_s$,
  $v'_{s+1} \ne v_{s+1}$ and $f'_s = f_s[d/\delta]$ for some action
  $\delta$, while $\mes'_s(d) = \id_d$ and $\mes'_s(a) = \epsilon$ if
  $a \ne d$ (this is by definition of a honest and visible
  deviation). So far, the message propagation system is working
  well. For every $r \ge s+1$, we write $h'_r$ for the prefix of
  length $r$ of $\rho'$. We show by induction on $r \ge s+1$ that for
  every $a$, $E_\zeta(\pi_a(h'_r))$ is well-defined and equal to
  $R'_r$, with $\last(R'_r) = (v'_r,X'_r)$ ($X'_r \ne \emptyset$),
  $R'_r$ is an outcome of $\zeta$, and the message propagation system
  has been working properly along $h'_r$.
  \begin{itemize}
  \item By construction of partial function $E_\zeta$, for every $a$,
    $E_\zeta(\pi_a(h'_{s+1}))$ is well-defined, and is equal to
    $R'_{s+1} = (v_0,\emptyset) \cdot ((v_0,\emptyset),f_0) \cdot
    (v_1,\emptyset) \dots (v_s,\emptyset) \cdot
    ((v_s,\emptyset),f_{s+1}) \cdot (v'_{s+1},X'_{s+1})$, where
    $X'_{s+1} = \update((v_s,\emptyset),f_{s+1},v'_{s+1})$ (with
    $X'_{s+1} \ne \emptyset$). Obviously, $R'_{s+1}$ is a prefix of an
    outcome of $\zeta$. So far, the message propagating system has
    worked well along $h'_{s+1}$.
  \item We assume that for every $a$, $E_\zeta(\pi_a(h'_r))$ (with
    $r \ge s+1$) is well-defined and equal to $R'_r$, with
    $\last(R'_r) = (v'_r,X'_r)$ ($X'_r \ne \emptyset$). We also assume
    that $R'_r$ is (a prefix of) an outcome of $\zeta$. Finally we
    assume that the message propagation system has worked properly
    along $h'_r$.

    We show that the same properties hold for $h'_{r+1}$. We fix a
    player $a$.
    Since $E_\zeta(\pi_a(h'_r)) = R'_r$, we have that $X'_r$ and
    $\mes'_{r-1}$ satisfy the conditions $(\ddag)$ given on
    page~\pageref{ddag}.

    To do the inductive step, we first look at
    $\sigma_a(h'_r)$:
    \[
      \sigma_a(h'_r) = \left\{\begin{array}{ll}
          \Big(\big(\zeta(E_\zeta(\pi_a(h'_r)))(d)\big)_a,\id_d\Big) &
          \text{if there is}\ b \in \Vois(a)\ \text{s.t.}\
                                                                       \mes'_{r-1}(b)
                                                                       \ne
                                                                       \epsilon \\
                                \Big(\big(\zeta(E_\zeta(\pi_a(h'_r)))(d)\big)_a,\epsilon\Big)
                                                                     &
                                                                       \text{otherwise}
      \end{array}\right.
    \]
    Hence, in all cases, we have
    $\sigma_a(h'_r)[1] = \big(\zeta(E_\zeta(\pi_a(h'_r)))(d)\big)_a =
    \big(\zeta(R'_r)(d)\big)_a$, and messages are properly propagated
    by player $a$.  We write $m' = \zeta(R'_r)(d)$, and define the
    message $\mes'_r$ by $\mes'_r(a) = \id_d$ if there is
    $b \in \Vois(a)$ such that $\mes'_{r-1}(b) = \id_d$, and
    $\mes'_r(a) = \epsilon$ otherwise.  Now, there is an action
    $\delta \in \Act$ such that
    \[
      h'_{r+1} = h'_r \cdot (m'[d/\delta],\mes'_r) \cdot v'_{r+1}
    \]
    We define 
    \[
    R'_{r+1} = R'_r \cdot ((v'_r,X'_r),\zeta(R'_r)) \cdot
    (v'_{r+1},X'_{r+1})
    \]
    with $X'_{r+1} = \update((v'_r,X),\zeta(R'_r),v'_{r+1})$.  Note
    that $X'_{r+1}$ is well-defined and nonempty since this is
    witnessed by
    $\Tab(v'_r,\big(\zeta(R'_r)(d)\big)[d/a]) = v'_{r+1}$. By
    construction of $E_\zeta$, we have that $E_\zeta(\pi_a(h'_{r+1}))$
    is well-defined and equal to $R'_{r+1}$.  Finally, along
    $h'_{r+1}$, the communication system has been working properly,
    and $R'_{r+1}$ is obviously (a prefix of) an outcome of $\zeta$
    (distinct from $R$). This concludes the proof of the inductive
    step, and of the lemma.
  \end{itemize}
\end{proof}

The following statement is an obvious consequence of the construction
of $E_\zeta$:

\begin{lemma}
  \label{sens1}
  If $\zeta$ is a winning strategy for \Eve in $\calE_\calG^G$, then
  $\sigma=\Omega(\zeta)$ is a normed strategy profile, which is
  resistant to single-player visible and honest deviations, and whose
  payoff is equal to the payoff of the outcome of $\zeta$ where \Adam
  complies to \Eve's suggestions.
\end{lemma}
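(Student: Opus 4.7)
The plan is to combine the three conclusions of Lemma~\ref{lemma5} with the definition of the winning condition for \Eve. Write $R$ for the outcome of $\zeta$ in which \Adam complies, and let $p$ be the payoff for which $\zeta$ is winning, so in particular $\payoff(\visited(R)) = p$, and for every other outcome $R'$ of $\zeta$ and every $d \in \dev(R')$, $\payoff_d(\visited(R')) \le p_d$.

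First I would read off the ``normed'' conclusion directly from the first item of Lemma~\ref{lemma5}, which is a structural property established during the construction of $\Omega(\zeta)$. Next, for the payoff claim, I would invoke the second item of Lemma~\ref{lemma5}: the main outcome $\rho$ of $\sigma$ satisfies $E_\zeta(\pi_a(\rho)) = R$ (for every player $a$, at the limit). In particular $\visited(\rho) = \visited(R)$, so $\payoff(\visited(\rho)) = p$, as required.

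The remaining and most delicate task is resistance to honest, immediately visible single-player deviations. Fix such a deviation $\sigma'_d$ for some player $d$, and let $\rho'$ be the outcome of $\sigma[d/\sigma'_d]$. By the third item of Lemma~\ref{lemma5}, $E_\zeta(\pi_a(\rho'))$ is well-defined and equal to a single outcome $R' \neq R$ of $\zeta$ with $\visited(\rho') = \visited(R')$. The key check is that $d \in \dev(R')$: at the first step where the deviation becomes visible, $\update$ is applied at an \Eve's state $(v, X)$ with either $X = \emptyset$ or $d \in \dev(X)$, and the next state $(v', X')$ satisfies $d \in \dev(X')$ since $d$ is a witness of the produced move (this is exactly how $E_\zeta$ places $R'$ in the epistemic game); thereafter, each $\update$ applied along $R'$ preserves $d$ in $\dev$ because the actual action sequence played by $d$ in $\rho'$ witnesses, at each step, that $d$ is still a possible deviator for the next vertex. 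Hence $d \in \dev(R'_r)$ for all sufficiently large $r$, so $d \in \dev(R')$.

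Given $d \in \dev(R')$, the winning condition of $\zeta$ yields $\payoff_d(\visited(R')) \le p_d$, and since $\visited(\rho') = \visited(R')$ we get $\payoff_d(\visited(\rho')) \le p_d = \payoff_d(\visited(\rho))$, so the deviation is not profitable. The main obstacle is precisely the bookkeeping step showing $d \in \dev(R')$; once that is carefully extracted from the definition of $\update$ and from the construction of $E_\zeta$ in the $X \neq \emptyset$ branch, the remainder of the proof is a direct invocation of Lemma~\ref{lemma5} and of the winning condition of \Eve.
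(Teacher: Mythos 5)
Your proposal is correct and follows essentially the same route as the paper: both deduce normedness, the payoff identity, and resistance to deviations directly from the three items of Lemma~\ref{lemma5} combined with \Eve's winning condition. The only difference is that you explicitly verify $d \in \dev(R')$ (needed to invoke the winning condition), a detail the paper's proof leaves implicit but which is indeed established inside the inductive construction in the proof of Lemma~\ref{lemma5}, where each $\update$ step is witnessed by an action of $d$.
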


\begin{proof}
  We assume that $\zeta$ is a winning strategy for \Eve, and that the
  payoff of the main outcome $R$ of $\zeta$ is $p = (p_a)_{a \in
    \Agt}$. Then, for each other outcome of $\zeta$, the payoff is
  bounded by $p$.
  
  Applying Lemma~\ref{lemma5}, the main outcome $\rho$ of $\sigma$ is
  such that $E_\zeta(\rho) = R$, yielding payoff $p$ for $\rho$.  Pick
  a honest and visible $d$-deviation $\sigma'_d$, and let $\rho'$ be
  the outcome of $\sigma[d/\sigma'_d]$. Then, $E_\zeta(\rho')$ is
  defined and is an outcome of $\zeta$ (again by application of
  Lemma~\ref{lemma5}), which payoff is therefore bounded by $p$. Hence, 
  $\sigma$ satisfies the expected conditions.
\end{proof}

\subsection{From normed profiles in $\calG$ to \Eve's strategies in
  $\calE_\calG^G$}

We assume an arbitrary total order $<$ on the set $\Act$. This will be
used to define unique corresponding (local) histories in $\calG$.

We first define a mapping assigning families of (local) histories in
$\calG$ to histories in $\calE_\calG^G$. Consider an
$\Eve$'s history
$H = (v_0,X_0) \cdot (v_0,X_0,f_0) \dots (v_s,X_s)$ in
$\calE_\calG^G$.
\begin{itemize}
\item If $X_s = \emptyset$, then for every $r < s$,
  $X_r = \emptyset$ as well, and therefore $f_r \in \Act^\Agt$. We
  then associate with $H$ the single full history, which is easily
  seen to be well-defined:
  \[
    h=v_0 \cdot (f_0,\mes_\epsilon) \cdot v_1 \cdot
    (f_1,\mes_\epsilon) \cdot v_2 \dots (f_{s-1},\mes_\epsilon) \cdot
    v_s
  \]
\item If $X_s \ne \emptyset$, then there is a smallest index $0<r_0
  \le s$ such that $X_{r_0} \ne \emptyset$, and for every $r_0 \le r
  \le s$, $X_r \ne \emptyset$. Note also that for every $r_0 \le r_1
  \le r_2 \le s$, $\dev(X_{r_2}) \subseteq \dev(X_{r_1})$.  We then
  associate with $H$ and with every deviator $d \in \dev(X_s)$, the
  (unique) full history:
  \[
    h_d = v_0 \cdot (m_0,\mes_0) \cdot v_1 \dots (m_{s-1},\mes_{s_1})
    \cdot v_s
  \]
  such that:
  \begin{itemize}
  \item $l < r_0$ implies $m_l = f_l$ and $\mes_l = \mes_\epsilon$;
  \item Let $l \ge r_0$. For every $a \in \Agt$, $\mes_l(a) = \id_d$
    if $a \in I^{X_l}_d$, otherwise $\mes_l(a) = \epsilon$. Then,
    $m_l(-d) = f_l(d)(-d)$, and $m_l(d)$ is the $<$-smallest
    action $\delta \in \Act$ such that $v_{l+1} = \Tab(v_l,
      f_l(d)[d/\delta])$.
  \end{itemize}
\end{itemize}
We denote by $\Lambda$ the function that associates with every
$\Eve$'s history $H$ in $\calE_\calG^G$, either the single full 
history $h$ (first case), or the family of full histories 
$(h_d)_{d \in \dev(X_s)}$ (second case).

\medskip Let $\sigma$ be a normed strategy profile in the original
game. We define the \Eve's strategy $\zeta$ as follows:
\begin{itemize}
\item if $H$ is such that $\last(H) = (v,\emptyset)$, then $\zeta(H)
  \in \Act^\Agt$ and $\zeta(H)(a) = \sigma_a(\Lambda(H))[1]$;
\item if $H$ is such that $\last(H) =(v,X)$ with $X \ne \emptyset$,
  then $\zeta(H) : \dev(X) \to \Act^\Agt$ is such that
  $\zeta(H)(d)(a)= \sigma_a(\Lambda(H)(d))[1]$ for all $d \in
  \dev(X)$.
\end{itemize}
This allows to define a function $\Upsilon$ associating a strategy in
the epistemic game to every normed profile in the original game.

Pick $\sigma$ a normed strategy profile in $\calG$, and write $\zeta =
\Upsilon(\sigma)$ for the corresponding strategy in $\calE_\calG^G$.

\begin{lemma}
  \label{lemma7}
  \begin{itemize}
  \item Let $R$ be the unique outcome of $\zeta$ where \Adam complies
    to \Eve's suggestions. Then, at the limit, $\Lambda(R)$ is the
    unique outcome $\rho$ of profile $\sigma$.
  \item Let $R'$ be an outcome of $\zeta$ along which \Adam does not
    always comply to \Eve's suggestions. Then, for every $d \in
    \lim_{s \to +\infty} \dev(X'_s)$, there exists some honest and
      visible deviation $\sigma'_d$ such that $\Lambda(R')(d) =
      \out(\sigma[d/\sigma'_d],v_0)$.
  \end{itemize}
\end{lemma}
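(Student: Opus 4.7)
My plan is to prove each of the two bullets by induction on the length of the prefix, using the definitions of $\Upsilon$ and $\Lambda$ together with the normedness of $\sigma$. In both cases the situation in the epistemic game mirrors the situation in $\calG$ step by step, so the verification reduces to checking that this correspondence is preserved by one more round.

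For the first bullet, I would first observe that the transition rules of $\calE_\calG^G$ preserve the invariant $X_s = \emptyset$ as long as \Adam complies, so $\Lambda$ assigns to each finite prefix of $R$ the single full history $v_0 \cdot (f_0,\mes_\epsilon) \cdot v_1 \cdots v_s$. Writing $\rho$ for the unique outcome of $\sigma$, I would show by induction on $s$ that the length-$s$ prefix of $\rho$ coincides with $\Lambda$ of the length-$s$ prefix of $R$. The inductive step uses (i) the definition of $\zeta = \Upsilon(\sigma)$, which gives $f_s(a) = \sigma_a(\Lambda(H_s))[1]$, so \Eve's suggestion matches the move prescribed by $\sigma$ from the induction hypothesis; (ii) the first condition of normedness, which forces every message along $\rho$ to be $\epsilon$, matching $\mes_\epsilon$; (iii) Adam's compliance, which gives $v_{s+1} = \Tab(v_s, f_s)$. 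Passing to the limit yields $\Lambda(R) = \rho$.

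For the second bullet, let $r_0$ be the first step at which \Adam does not comply. Then $X'_l = \emptyset$ for $l < r_0$ and $X'_l \neq \emptyset$ for $l \geq r_0$; moreover $\dev(X'_l)$ is a decreasing sequence of nonempty finite subsets of $\Agt$, hence stabilizes, so $\lim_s \dev(X'_s)$ is nonempty. Fix $d$ in this limit, and let $h_d = \Lambda(R')(d) = v_0 \cdot (m_0,\mes_0) \cdot v_1 \cdots$. I would define $\sigma'_d$ on prefixes of $h_d$ by $\sigma'_d(h_d^{\leq l}) = (m_l(d), \mes_l(d))$ (and arbitrarily elsewhere). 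The key step is an induction on $l$ showing that along $h_d$ every player $a \neq d$ plays exactly $(m_l(a), \mes_l(a))$ according to $\sigma_a$, so that $\out(\sigma[d/\sigma'_d], v_0) = h_d$. For the action component, this follows from $\zeta(H_l)(d)(a) = \sigma_a(\Lambda(H_l)(d))[1] = \sigma_a(h_d^{\leq l})[1]$, which equals $m_l(a)$ by definition of $\Lambda$ for $l \geq r_0$, and trivially before.

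The main obstacle will be verifying that the message component matches, that is, $\sigma_a(h_d^{\leq l})[2] = \mes_l(a)$. This requires a parallel induction showing that $I^{X'_l}_d$ coincides exactly with the set of players that have received signal $\id_d$ along $h_d^{\leq l}$ via the epidemic propagation prescribed by the normed $\sigma$. The base case $l = r_0$, where $I^{X'_{r_0}}_d = \{a \mid d \fleche a\}$, matches the second normed condition; the inductive step $I^{X'_{l+1}}_d = I^{X'_l}_d \cup \{a \mid \exists b \in I^{X'_l}_d,\ b \fleche a\}$ matches the third. Once this identification is made, honesty is immediate since $d \in I^{X'_l}_d$ for every $l \geq r_0$, so $\sigma'_d$ broadcasts $\id_d$ from step $r_0$ on; and immediate visibility follows from the very choice of $r_0$, since $v'_{r_0} \neq \Tab(v'_{r_0-1}, f_{r_0-1})$. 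This completes the construction of $\sigma'_d$ with the required properties.
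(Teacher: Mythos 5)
Your proposal is correct and follows essentially the same route as the paper: an induction on prefix length that matches actions via the definitions of $\Upsilon$ and $\Lambda$, and a parallel inductive invariant identifying $I^{X'_l}_d$ with the set of players having received $\id_d$, whose base and step are exactly the second and third normedness conditions. The only cosmetic difference is that you define $\sigma'_d$ on all prefixes of $h_d$ up front and then verify the outcome, whereas the paper builds $\sigma'_d$ incrementally inside the induction.
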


\begin{proof}
  Write
  $R = (v_0,\emptyset) \cdot ((v_0,\emptyset),f_0) \cdot
  (v_1,\emptyset) \cdot ((v_1,\emptyset),f_1) \cdot (v_2,\emptyset)
  \dots$ for the outcome of $\zeta$, along which \Adam complies to
  \Eve's suggestions.  Then it is easy to argue that the outcome of
  $\sigma$ is
  $\rho = v_0 \cdot (f_0,\mes_\epsilon) \cdot v_1 \cdot
  (f_1,\mes_\epsilon) \cdot v_2 \dots$, and $\rho$ coincides with
  $\Lambda(R)$ (at the limit).

  Let $R' = (v'_0,X'_0) \cdot ((v'_0,X'_0),f'_0) \cdot (v'_1,X'_1)
  \dots$ such that there is some (the first) $r$ such that $X'_r \ne
  \emptyset$. We show by induction on $s \ge r$ that for every $d \in
  \dev(X'_s)$, there is a honest $d$-deviation $\sigma'_d$ such that
  the length-$s$ outcome of $\sigma[d/\sigma'_d]$ is
  \[
    v'_0 \cdot (f'_0,\mes_0) \cdot v'_1 \cdot (f'_1,\mes_1) \cdot v'_2
    \dots (f'_{s-1},\mes_{s-1}) \cdot v'_s = \Lambda(R'_{\le s})(d)
  \]
  In the same induction, we will prove that
  $I^{X'_s}_d = \{a \in \Agt \mid \exists b \in \Vois(a)\ \text{s.t.}\
  \mes_{s-1}(b) = \id_d\}$.

  Before proving the induction, notice that
  $v'_0 \cdot (f'_0,\mes_0) \cdot v'_1 \cdot (f'_1,\mes_1) \dots
  v'_{r-1} = v_0 \cdot (f_0,\mes_\epsilon) \cdot v_1 \cdot
  (f_1,\mes_\epsilon) \dots v_{r-1}$ (that is, it follows the main
  outcome), and $v'_r \ne v_r$.
  \begin{itemize}
  \item Assume $s=r$. Then,
    $X'_r = \update((v_{r-1},\emptyset),\zeta(R_{\le r-1}),v'_r)$
    (defined and nonempty), with
    $\zeta(R_{\le r-1})(a) = \sigma_a(\rho_{\le r-1})[1]$. By
    construction, for every $d \in \dev(X'_r)$, there is
    $\delta \in \Act$ such that
    $v'_r = \Tab(v_{r-1},\sigma(\rho_{\le r-1})[d/\delta])$, where
    $\sigma(\rho_{\le r-1})$ denotes abusively the tuple
    $(\sigma_a(\rho_{\le r-1})[1])_{a \in \Agt}$. We define
    $\sigma'_d(\rho_{\le r-1}) = (\delta,\id_d)$ (this is a honest and
    visible deviation). Then,
    $\rho_{\le r} \in \out(\sigma[d/\sigma'_d],v_0)$. Also, since the
    system of message propagation under $\sigma$ is behaving well, it
    is the case that for every $a$, $\mes_{s-1}(a) = \epsilon$ if
    $a \ne d$, and $\mes_{s-1}(d) = \id_d$ (as given by
    $\sigma'_d(\rho_{\le r-1})$). It is an easy check to prove the
    property on messages.
  \item Assume we have proven the result at rank $s$, and consider
    $R'_{\le s+1} = R'_{\le s} \cdot ((v'_s,X'_s),f'_s) \cdot
    (v'_{s+1},X'_{s+1})$. Pick $d \in \dev(X'_{s+1})$. Since then,
    $d \in \dev(X'_s)$ as well, we can apply the induction hypothesis
    to $R'_{\le s}$, and we have a deviation $\sigma'_d$ such that the
    length-$s$ outcome of $\sigma[d/\sigma'_d]$ is
    $\Lambda(R'_{\le s})(d)$.

    By definition, $X'_{s+1} = \update((v'_s,X'_s),f'_s,v'_{s+1})$,
    hence there exists $\delta \in \Act$ such that
    $\Tab(v'_s,f'_s(d)[d/\delta]) = v'_{s+1}$. We then define
    $\sigma'_d(\Lambda(R'_{\le s})_d) = (\delta,\id_d)$. The
    length-$(s+1)$ outcome of $\sigma[d/\sigma'_d]$ is
    \[
    \Lambda(R'_{\le s})(d) \cdot (m[d/\delta],\mes) \cdot v'
    \]
    where:
    \begin{itemize}
    \item for every $a \in \Agt$, $m(a) = \sigma_a(\Lambda(R'_{\le
        s})(d))[1]$
    \item for every $a \in \Agt$,
      $\mes(a)= \sigma_a(\Lambda(R'_{\le s})(d))[2]$
    \item $v' = \Tab(v'_s,m[d/\delta])$
    \end{itemize}
    Now, we show $m[d/\delta] = f'_s(d)$: $f'_s(d)(a) = (\zeta(R'_{\le
      s})(d))(a) = \sigma_a(\Lambda(R'_{\le s})(d)[1] = m(a)$. Hence,
    $v' = v'_{s+1}$. We conclude that the length-$(s+1)$ outcome of
      $\sigma[d/\sigma'_d]$ is
    \[
    \Lambda(R'_{\le s})(d) \cdot (\zeta(R'_{\le
      s})(d)[d/\delta],\mes) \cdot v'_{s+1} = \Lambda(R'_{\le
      s+1})(d)
    \]
    Concerning the messages:
    $I^{X'_{s+1}}_d = \{a \in \Agt \mid \exists b \in I^{X'_s}_d\
    \text{s.t.}\ \dist_G(a,b) \le 1\}$. Since $d$ is the deviator and
    the deviation is honest, $\mes_{s-1}(d) = \mes_s(d) = \id_d$, and
    for every $a \ne d$,
    $\mes_s(a) = \mes(a) = \sigma_a(\Lambda(R'_{\le s})(d))[2]$. Since
    $\sigma$ is normed, messages are propagated correctly by all players
    $a \ne d$. Hence the expected equality holds.
   \end{itemize}
\end{proof}

\begin{lemma}
  \label{sens2}
  If $\sigma$ is a normed strategy profile in $\calG$, which is
  resistant to single-player immediately visible and honest
  deviations, then $\zeta = \Upsilon(\sigma)$ is a winning strategy in
  $\calE_\calG^G$.
\end{lemma}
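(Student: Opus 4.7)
The plan is to directly verify the two conditions in the definition of a winning strategy for \Eve, namely that (a) the compliant outcome of $\zeta$ yields payoff exactly $p := \payoff(\vertices(\rho))$ (where $\rho$ is the main outcome of $\sigma$), and (b) every non-compliant outcome $R'$ of $\zeta$ satisfies $\payoff_d(\visited(R')) \le p_d$ for each $d \in \dev(R')$. All of the heavy lifting is already encapsulated in Lemma~\ref{lemma7}, so the argument will essentially be a bookkeeping exercise on top of that lemma.

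First I would set $p = \payoff(\vertices(\rho))$, where $\rho = \out(\sigma,v_0)$. Let $R$ be the unique outcome of $\zeta$ along which \Adam complies with \Eve's suggestions. By the first item of Lemma~\ref{lemma7}, taking the limit, $\Lambda(R) = \rho$, and in particular $\visited(R) = \vertices(\rho)$. Hence $\payoff(\visited(R)) = p$, which establishes condition (a).

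Next, consider any other outcome $R'$ of $\zeta$. Since $R' \ne R$, \Adam does not always comply along $R'$, so there is some first index $r$ at which $X'_r \ne \emptyset$, and the sequence $(\dev(X'_s))_{s \ge r}$ is non-increasing (this follows from the construction of $\update$ in a nonempty \Eve's state, as observed in Section~\ref{sec:epistemic}). In particular $\dev(R') = \lim_{s \to +\infty} \dev(X'_s)$ is well-defined. Fix any $d \in \dev(R')$. Then $d \in \dev(X'_s)$ for every $s \ge r$, so the second item of Lemma~\ref{lemma7} applies (at each finite prefix, then passing to the limit by König-style concatenation of the deviations since each extends the previous): there exists an immediately visible and honest $d$-deviation $\sigma'_d$ such that $\out(\sigma[d/\sigma'_d],v_0)$ and $\Lambda(R')(d)$ coincide, and by construction of $\Lambda$ both project to $\visited(R')$ on vertices. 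Since $\sigma$ is resistant to immediately visible honest single-player deviations, we get $\payoff_d(\vertices(\out(\sigma[d/\sigma'_d],v_0))) \le p_d$, hence $\payoff_d(\visited(R')) \le p_d$, establishing condition (b).

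The only subtle point, and the place where I would be most careful, is the passage from the finite-prefix statement in Lemma~\ref{lemma7} (which produces, for each $s$, a deviation $\sigma'_d$ realising the length-$s$ prefix of $\Lambda(R')(d)$) to an infinite $d$-deviation whose full outcome equals $\Lambda(R')(d)$. Because the inductive construction in the proof of Lemma~\ref{lemma7} extends the previous $\sigma'_d$ rather than redefining it (at each step the new action $\sigma'_d(\Lambda(R'_{\le s})(d))$ is defined while previous values are untouched), these finite deviations are consistent and glue into a single $d$-strategy whose outcome is exactly $\Lambda(R')(d)$. Once this is made explicit, the two conditions above are immediate and the lemma follows.
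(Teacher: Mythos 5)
Your proposal is correct and follows essentially the same route as the paper: both establish condition (a) via the first item of Lemma~\ref{lemma7} applied to the compliant outcome, and condition (b) by extracting, for each $d\in\dev(R')$, the honest immediately visible $d$-deviation furnished by the second item of Lemma~\ref{lemma7} and invoking resistance of $\sigma$ to such deviations. Your explicit justification of the passage to the limit (gluing the inductively constructed finite deviations into one infinite $d$-strategy) is a detail the paper leaves implicit, and you correctly invoke the actual hypothesis (resistance to immediately visible honest deviations) where the paper's proof loosely says ``Nash equilibrium'', but these are refinements of the same argument rather than a different one.
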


\begin{proof}
  Let $p$ be the payoff associated with $\rho = \out(\sigma,v_0)$. The
  outcome of $\zeta$ when \Adam complies to \Eve's suggestions is $R$
  such that $\rho = \Lambda(R)$. In particular, $R$ has the same
  payoff as $\rho$, that is, $p$.  

  Assume now that $R' = (v_0,X_0) \cdot ((v_0,X_0),f_0) \cdot
  (v_1,X_1) \dots$ is a play in the epistemic game such that from some
  point on, $X_i \ne \emptyset$. Then, from Lemma~\ref{lemma7}, for
  every $d \in \lim_{s \to +\infty} \dev(X'_s)$, there exists
    $\sigma'_d$ such that
  \[
  \rho' = \out(\sigma[d/\sigma'_d],v_0) = \Lambda(R')(d)
  \]
  The payoff of $\rho'$ and $R'$ therefore coincide (and are equal to
  $p'$). Since $\sigma$ is a Nash equilibrium, $p'_d \le p_d$. Hence
  for every $d \in \lim_{s \to +\infty} \dev(X'_s)$, the payoff of
  player $d$ along $R'$ is bounded by $p_d$. Hence $\zeta$ is
  winning. 
\end{proof}

\subsection{Conclusion}

As a consequence of Theorem~\ref{coro}, Lemmas~\ref{sens1}
and~\ref{sens2}, we get Theorem~\ref{theo:main}:

\main*

Note that all the results are constructive, hence if one can
synthesize a winning strategy for \Eve in $\calE_\calG^G$, then we can
synthesize a correponding Nash equilibrium in $\calG$.

\section{Complexity analysis}

By application of Lemma~\ref{lma}, we get:

\begin{lemma}
  \label{coro:small}
  Let $(v,\emptyset) \cdot ((v,\emptyset),f_0) \cdot (v_1,X_1) \cdot
  ((v_1,X_1),f_1) \cdot (v_2,X_2) \dots$ with $X_1 \ne \emptyset$ be a
  history in $\calE_\calG^G$. Then for every $r \ge 1$:
  \begin{itemize}
  \item $\dev(X_r) \subseteq \dev(X_1)$;
  \item for every $d \in \dev(X_r)$, for every $a \in \Agt$,
    $\dist_G(d,a) \le r$ iff $a \in I_d^{X_r}$. 
  \item If $d \in \dev(X_r)$ and $\dist_G(d,a) \le r$, then
    $K_d^{X_r}(a) = \{d\}$.
  \item If $d \in \dev(X_r)$ and $\dist_G(d,a) > r$, then
    $K_d^{X_r}(a) = \dev(X_r) \setminus \{d' \in \dev(X_r) \mid
    \dist_G(d',a) \le r\}$.
  \end{itemize}
\end{lemma}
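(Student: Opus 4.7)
The plan is to prove all four bullets simultaneously by induction on $r \ge 1$, with the second bullet (the characterization $I_d^{X_r} = \{a \mid \dist_G(d,a) \le r\}$) carrying the real work and the other three following almost immediately from it together with Lemma~\ref{lma}.

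For the base case $r = 1$, I invoke the $X = \emptyset$ branch of the $\update$ rule applied to $(v,\emptyset)$, $f_0$, $v_1$. It directly gives $\dev(X_1) = \{d \mid \exists \delta \in \Act,\ \Tab(v, f_0[d/\delta]) = v_1\}$, $I_d^{X_1} = \{a \mid d \fleche a\}$, $K_d^{X_1}(a) = \{d\}$ for $a \in I_d^{X_1}$, and $K_d^{X_1}(a) = \dev(X_1) \setminus \{b \mid b \fleche a\}$ for $a \notin I_d^{X_1}$. Rewriting $\{a \mid d \fleche a\}$ as $\{a \mid \dist_G(d,a) \le 1\}$ yields all four statements at $r=1$.

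For the inductive step I apply the $X \ne \emptyset$ branch of $\update$ to $(v_r,X_r)$, $f_r$, $v_{r+1}$. The containment $\dev(X_{r+1}) \subseteq \dev(X_r) \subseteq \dev(X_1)$ is immediate from the rule plus the induction hypothesis. For the $I$-characterization, the rule says $I_d^{X_{r+1}} = I_d^{X_r} \cup \{a \mid \exists b \in I_d^{X_r},\ b \fleche a\}$. Substituting the inductive hypothesis $I_d^{X_r} = \{a \mid \dist_G(d,a) \le r\}$ and using the elementary graph fact that the one-step $\fleche$-expansion of the distance-$r$ ball around $d$ equals the distance-$(r+1)$ ball (the new vertices are exactly those at distance $r+1$, witnessed by a penultimate vertex on a shortest path), this becomes $\{a \mid \dist_G(d,a) \le r+1\}$ as desired.

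The final two bullets then follow mechanically. For $a \in I_d^{X_{r+1}}$, Lemma~\ref{lma} directly yields $K_d^{X_{r+1}}(a) = \{d\}$. For $a \notin I_d^{X_{r+1}}$, Lemma~\ref{lma} gives $K_d^{X_{r+1}}(a) = \dev(X_{r+1}) \setminus \{d' \in \dev(X_{r+1}) \mid a \in I_{d'}^{X_{r+1}}\}$, and substituting the freshly proved equivalence $a \in I_{d'}^{X_{r+1}} \iff \dist_G(d',a) \le r+1$ produces the stated formula. I expect no real obstacle beyond bookkeeping: in particular, the rather opaque $\max\{\dist_G(c,c') \mid c' \in I_c^{X_r}\}+1$ expression appearing in the recursive $K$-update is bypassed entirely, since once Lemma~\ref{lma} is invoked the $K$-values at step $r+1$ are determined by the $I$-values at step $r+1$ rather than by re-running the recursive update by hand.
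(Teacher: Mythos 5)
Your proof is correct and follows exactly the route the paper intends: the paper gives no explicit argument beyond ``by application of Lemma~\ref{lma}'', and your write-up supplies precisely the missing piece, namely the induction showing $I_d^{X_r}$ is the distance-$r$ ball via the $I$-update rule, with the $K$-bullets then read off from Lemma~\ref{lma}. No gaps.
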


\size*

\begin{proof}
  We start by evaluating the number of \Eve's states.
  First, the number of \Eve's states $(v,\emptyset)$ is obviously
  $|V|$.

  Then, pick an \Eve state $(v',X')$ with $X' \ne \emptyset$, such
  that there is a transition $((v,\emptyset),f) \to (v',X')$ in
  $\calE_\calG^G$ (those are immediate visible deviations). Then,
  following an argument used in~\cite[Prop.~4.8]{BBMU15}, we can show
  that $|\Tab| \ge 2^{|\dev(X')|}$: indeed, each player in $\dev(X')$
  has been able to deviate, hence can at least do two actions from the
  current state, yielding the claimed bound.

  We will now analyze the part which is reachable from
  $(v',X')$. Applying Lemma~\ref{coro:small}, any \Eve's state
  $(v'',X'')$ reachable from $(v',X')$ is such that $\dev(X'')
  \subseteq \dev(X')$, and is fully characterized by
  $(v',\dev(X''),r)$ where $\dev(X'') \subseteq \dev(X')$ and $r$ is
  the distance from $(v',X')$. Hence, this number of states is bounded
  by $|V| \cdot 2^{|\dev(X')|} \cdot (\diam(G)+2)$, where $\diam(G)$
  is the maximal diameter of the connected components of $G$ (the $+2$
  term is for ``distance $+\infty$'' and for ``distance larger than
  the diameter''). Hence it is bounded by $|V| \cdot |\Tab| \cdot
  (\diam(G)+2)$.

  Since there are at most $|\Tab|$ possible deviation starting points
  $(v',X')$, the number of \Eve's states is bounded by $|V| + |V| \cdot
  |\Tab|^2 \cdot (\diam(G)+2)$.

  Now we evaluate the number of \Adam's states. There are at most
  $|\Tab|$ states of the form $((v,\emptyset),f)$. Now, from an
  \Eve's state $(v,X)$ with $X \ne \emptyset$, there are \Adam's states
  $((v,X),f)$ with $f : \dev(X) \to \Act^\Agt$. It is a priori
  difficult to reduce the number of such $f$, which is bounded by
  $|\Act|^{|\dev(X)| \cdot |\Agt|}$, hence by~$|\Act|^{|\Agt|^2}$.
  %
  %
  %
\end{proof}

\end{document}